\DeclarePairedDelimiter\floor{\lfloor}{\rfloor}
\newtheorem{theorem}{Theorem}
\newtheorem{proposition}{Proposition}
\newtheorem{remark}{Remark}
\newtheorem{definition}{Definition}
\newtheorem{assumption}{Assumption}
\begin{document}

\title{Diversity-Weighted Portfolios with \\Negative Parameter\thanks{The authors are greatly indebted to Adrian Banner, Robert Fernholz, Thibaut Lienart, Michael Monoyios, Vassilios Papathanakos, Julen Rotaetxe, Johannes  Ruf, and an anonymous referee for their very careful reading of the manuscript  and their many, insightful, and helpful suggestions.}
}

\author{Alexander Vervuurt  \thanks{
              Mathematical Institute, University of Oxford, Andrew Wiles Building, Radcliffe Observatory Quarter, Woodstock Road, Oxford, OX2 6GG, UK  (E-Mail: {\tt vervuurt@maths.ox.ac.uk}).
             Alexander Vervuurt gratefully acknowledges Ph.D. studentships from the Engineering and Physical Sciences Research Council, Nomura and the Oxford-Man Institute of Quantitative Finance.}
                     \and
        Ioannis Karatzas \thanks{Department of Mathematics, Columbia University, New York, NY 10027, USA (E-Mail: {\tt ik@math.columbia.edu}), and \textsc{Intech} Investment Management, One Palmer Square, Suite 441, Princeton, NJ 08542, USA (E-Mail: {\tt ik@enhanced.com}). Research supported in part by the National Science Foundation under Grant NSF-DMS-14-05210.}  
}

\date{\today}

\maketitle

\begin{abstract}
%\footnote{\textcolor{red}{In the acknowledgements on this page, I added the referee, as well as Banner and Fernholz. Anyone missing? (AV)}}
We analyze a negative-parameter variant of the diversity-weighted portfolio studied by \citet*[Finance Stoch 9(1):1--27,][]{fkk05}, which invests in each company a fraction of wealth  inversely proportional to the company's \emph{market weight} (the ratio of its capitalization to that of  the entire market).   We show that this strategy outperforms the market with probability one over sufficiently long time-horizons, under a non-degeneracy assumption on the volatility structure and under the assumption that   the market weights admit a positive lower bound. Several modifications of this portfolio are put forward, which outperform the market under milder versions of the latter \emph{no-failure} condition, and one of which is rank-based. An empirical study suggests that such strategies as studied here have  indeed the potential to outperform the market    and to be  preferable investment opportunities,  even under realistic proportional transaction costs.
\medskip
\medskip
\medskip
\medskip
\medskip

%\keywords{Portfolios \and Portfolio generating functions \and Relative arbitrage \and Stochastic Portfolio Theory \and Diversity-weighted portfolios}

%\flushleft{\textbf{JEL Classification} G11}
% \PACS{PACS code1 \and PACS code2 \and more}
% \subclass{MSC code1 \and MSC code2 \and more}
\end{abstract}

\section{Introduction}
\label{sec:intro}

Stochastic Portfolio Theory offers a relatively novel approach to portfolio selection in stock markets, aiming -- among other things -- to construct portfolios which outperform an index, or benchmark portfolio,  over a given time-horizon with probability one, whenever this might be possible. The reader is referred to R. Fernholz's monograph \citep{f02}, and to the more recent overview by \cite{fk09}. In \cite{fkk05} such outperforming portfolios have been shown to exist over sufficiently long time-horizons, in market models which satisfy certain assumptions; namely, those of weak diversity and non-degeneracy. 

One such investment strategy is the so-called \emph{diversity-weighted portfolio}. This re-calibrates the weights of the market portfolio, by raising them all to some given power $p\in(0,1)$ and then re-normalizing; see \cite{fkk05}. We study here a variant of this   strategy, namely, a diversity-weighted portfolio with \emph{negative} parameter $p<0$. This strategy invests in each company a proportion of wealth   that is \emph{inversely} proportional to the ratio 
of the company's  capitalization to the capitalization of the entire market (this ratio is called the company's \emph{market weight}). As a result, the strategy sells    the company's stock as its value increases, and buys it as its value decreases.

%%%%%%%%%%%%%%%%
\subsection{Preview}
%\label{sec:ra}
%%%%%%%%%%%%%%%%

We set up the model and introduce the necessary definitions in Section \ref{sec:prelim}. Our first main result is that this negative-parameter diversity-weighted portfolio also outperforms the market over long time-horizons, but now under a \emph{no-failure} condition. This postulates that all market weights are uniformly bounded from below by a positive constant --- see Section \ref{sec:main}. The no-failure condition is  stronger  than diversity. 

A rank-based modification of this portfolio, which only invests in small-capitalization stocks, is then shown in Section \ref{sec:rank} to outperform the market, also under this assumption of ``no-failure". For certain positive parameters $p\in(0,1)$, this rank-based portfolio outperforms the market under a milder and more realistic condition, namely that of \emph{limited-failure}. This condition posits a uniform lower bound on market weights    only   for the $m<n$ largest firms by capitalization, with $n$ the total number of equities in the market.

We present two additional results in Section \ref{sec:further}. The first   shows that the negative-parameter diversity-weighted portfolio outperforms its positive-para-meter version under an additional boundedness assumption on the covariation  structure of the market;  
this posits that the eigenvalues of the covariance matrix in the underlying It\^o model are bounded from above, uniformly in time. Our second result studies a composition of these two diversity-weighted portfolios  and shows that, under the condition of diversity, this ``mix" almost surely performs better than the market over sufficiently long time-horizons.

We carry out an empirical study of the constructed portfolios in Section \ref{sec:empiric}, demonstrating that our adjustments of the diversity-weighted portfolio would have outperformed quite considerably the S\&P 500 index, if implemented on the index constituents over the 25 year period   between   January 1, 1990 and    December 31,  2014. In this study  we incorporate 0.5\% proportional transaction costs, delistings due to bankruptcies, mergers and acquisitions, and distributions such as dividends. We compute the relative returns of our portfolios over this period, as well as the Sharpe ratios relative to the market index.  
We discuss the results and possible future work in Section \ref{sec:discussion}.

%%%%%%%%%%%%%%%%%%%%%%%%%%%%%%%%%%%%
\section{Preliminaries}
\label{sec:prelim}

\subsection{The Model}
\label{sec:model}
Our market model is the standard one of Stochastic Portfolio Theory \citep{f02}, where the stock capitalizations are modeled as It\^o processes. Namely, the dynamics of the $n$ stock capitalization processes $X_i(\cdot),\,  i=1,\ldots,n\,$ are given by
\begin{align} \label{model}
\mathrm{d}X_i(t) &= X_i(t)\bigg(b_i(t)\,\mathrm{d}t + \sum_{\nu=1}^d \sigma_{i\nu}(t)\, \mathrm{d}W_\nu(t)  \bigg),\quad  t\geq0,\quad i=1,\ldots,n\,;
\end{align}
here $W_1(\cdot),\ldots,W_d(\cdot)$ are independent standard Brownian motions with $\, d \ge n$, and $X_i(0)>0,\ i=1,\ldots,n$ are the initial capitalizations. We assume all processes to be defined on a probability space $(\Omega,\mathcal{F},\mathbb{P})$, and adapted to a filtration $\mathbb{F} = \{ \mathcal{F} (t)\}_{0 \le t < \infty}\,$ that satisfies the usual conditions and contains the filtration generated by the  ``driving" Brownian motions. 

\smallskip
The processes of  \emph{rates of return}  $b_i(\cdot),\ i=1,\ldots,n\,$ and  of \emph{volatilities}  $\sigma(\cdot)=(\sigma_{i\nu}(\cdot))_{1\leq i\leq n,1\leq \nu\leq d}\,,$ are $\mathbb{F}$-progressively measurable and assumed to satisfy the integrability condition
\begin{equation}
\sum_{i=1}^n \int_0^T \Big( |b_i(t)| + \sum_{\nu=1}^d(\sigma_{i\nu}(t))^2 \Big)\,\mathrm{d}t<\infty,\quad \mathbb{P}\text{-a.s.};\quad\forall \,\,T\in(0,\infty),
\end{equation}
as well as the \emph{non-degeneracy} condition
\begin{equation} \label{ND} \tag{ND}
\exists \, \varepsilon>0\ \text{ such that:}\ \, \xi'\sigma(t)\sigma'(t)\xi\geq \varepsilon ||\xi||^2,\quad \forall \,\, \xi\in\mathbb{R}^n,\ \,\,t\geq0\,; \quad\mathbb{P}\text{-a.s.}
\end{equation}

%%%%%%%%%%%%%%%%
\subsection{Relative Arbitrage}
\label{sec:ra}
%%%%%%%%%%%%%%%%

We study investments in the equity market described by \eqref{model} using \emph{portfolios}. These are $\mathbb{R}^n$-valued and $\mathbb{F}$-progressively measurable processes $\pi(\cdot) = \big( \pi_1 (\cdot), \cdots, \pi_n (\cdot) \big)'$, where $\pi_i(t)$ stands for the proportion of   wealth invested in stock $i$ at time $t$. 

We restrict ourselves to \emph{long-only} portfolios. These invest solely in the stocks, namely 
\begin{equation}
\pi_i(t)\geq0,\quad i=1,\ldots,n\,, \qquad \text{and} \qquad \sum_{i=1}^n \pi_i(t)=1,\quad \forall\,\, t\geq0\,;
\end{equation}
in particular,  there is no money market. The corresponding wealth process $V^\pi(\cdot)$ of an investor implementing $\pi(\cdot)$ is seen to evolve as follows (we normalize the initial wealth to 1):
\begin{equation}
\frac{\mathrm{d}V^\pi(t)}{V^\pi(t)}\,=\,\sum_{i=1}^n \pi_i(t)\, \frac{\mathrm{d}X_i(t)}{X_i(t)}\,,\qquad V^\pi(0)=1.
\end{equation}

We shall   measure performance, for the most part,  with respect to the market index. This is the wealth process $V^\mu(\cdot)$ that results from a buy-and-hold portfolio,  given by the vector process $\mu(\cdot) = \big( \mu_1(\cdot), \cdots, \mu_n (\cdot) \big)'$   of \emph{market weights}%, defined as
\begin{equation} \label{defmu}
\mu_i(t)\,:=\,\frac{X_i(t)}{X(t)},\quad i=1,\ldots,n\,,\quad 
\text{where}\ \ \ X(t)\,:=\, \sum_{i=1}^n X_i(t).
\end{equation}

\begin{definition} 
\label{def:arbitrage}
A \emph{relative arbitrage} with respect to a portfolio $\rho(\cdot)$ over the time-horizon $[0,T]$, for a real number $T>0\,$, is   a portfolio $\pi(\cdot)$ such that
\begin{equation}
\mathbb{P} \big(V^\pi(T) \geq V^\rho(T)\big)=1 %\quad \mathbb{P}\text{-a.s.}, 
\qquad \text{and}\qquad \mathbb{P}\big(V^\pi(T) > V^\rho(T)\big)>0.
\end{equation}
An equivalent way to express this notion, is to say that {\it the portfolio $\pi(\cdot)$ outperforms portfolio} $\rho(\cdot)$ over the time-horizon $[0,T]$. 

\smallskip
We call this  relative arbitrage \emph{strong}, if in fact $\mathbb{P}\big(V^\pi(T) > V^\rho(T)\big)=1\,$; and sometimes we express this by saying that the portfolio $\pi(\cdot)$   outperforms the portfolio $\rho(\cdot)$ {\it strongly} over the time-horizon $[0,T]$. \qed
\end{definition} 

We introduce the $\mathbb{R}^{n\times n}$-valued \emph{covariation} process $a(\cdot)=\sigma(\cdot)\sigma'(\cdot)$ and, writing $e_i$ for the $i$\textsuperscript{th} unit vector in $\mathbb{R}^n$, the \emph{relative covariances} 
\begin{equation} 
\tau^\mu_{ij}(t) := \big(\mu(t)-e_i\big)'a(t)\big(\mu(t)-e_j\big),\qquad 1\leq i,j\leq n.
\end{equation} % = d<\log \mu_i, \log \mu_j>/dt ???
Finally, we define the \emph{excess growth rate} $\gamma^*_\pi(\cdot)$ of a portfolio $\pi(\cdot)$ as
\begin{equation}\label{defexcess}
\gamma^*_\pi(t):=\frac{1}{2}\bigg(\sum_{i=1}^n \pi_i(t)a_{ii}(t) - \sum_{i,j=1}^n \pi_i(t) a_{ij}(t) \pi_j(t) \bigg).
\end{equation}
We shall use the reverse-order-statistics notation, defined recursively by
\begin{align}\label{deforder}
\theta_{(1)}(t)&=\max_{1\leq i \leq n} \{ \theta_i(t) \} \\
\theta_{(k)}(t)&=\max \big( \{\theta_1(t),\ldots,\theta_n(t)  \}\setminus\{ \theta_{(1)}(t),\ldots,\theta_{(k-1)}(t) \} \big), \quad k=2,\ldots,n \nonumber
\end{align}
for any $\mathbb{R}^n$-valued process $\theta(\cdot)$. Here ties are resolved lexicographically, always in favor of the lowest index $\,i$\,.  % \footnote{~ Small addition here. Please check. (IK)  \textcolor{red}{Yes, this is clearer. (AV)}} 
Thus we have
\begin{equation} 
 \theta_{(1)}(t)\geq \theta_{(2)}(t)\geq \ldots \geq \theta_{(n)}(t).
\end{equation}

The non-degeneracy condition \eqref{ND} implies, on the strength of Lemma 3.4 in \cite{fk09} (originally proved in the Appendix of \cite{fkk05}), that for any long-only portfolio $\pi(\cdot)$  we have with probability one:
\begin{equation}\label{NDgamma}
\gamma^*_\pi(t) \geq \frac{\varepsilon}{\,2\,}\big(1-\pi_{(1)}(t)\big),\quad \forall \,\, t  \geq0\,.
\end{equation}

%%%%%%%%%%%%%%%%%%%%%%%
\subsection{Functionally-Generated Portfolios}
\label{sec:master}
%%%%%%%%%%%%%%%%%%%%%%%

A particular class of portfolios, called \emph{functionally-generated portfolios}, was introduced and studied by \cite{f95}. 

Consider a function $\mathbf{G}\in C^2(U,\mathbb{R}_+)$, where $U$ is an open neighborhood of 
\begin{equation}
\Delta^n_+ = \big\{ x\in\mathbb{R}^n :  x_1+\ldots+x_n=1,\ 0<x_i<1,\ i=1,\ldots,n \big\},
\end{equation}
and such that $x\mapsto x_i\mathrm{D}_i\log \mathbf{G}(x)$ is bounded on $\, \Delta^n_+ \,$ for $i=1,\ldots,n$.  % \footnote{We write $\mathrm{D}_i$ for the partial derivative with respect to the $i^\text{th}$ coordinate, and $\mathrm{D}^2_{ij}$ for the second partial derivative with respect to the $i^\text{th}$ and $j^\text{th}$ coordinates.}
Then $\mathbf{G}$ is said to be the \emph{generating function} of the portfolio $\pi(\cdot)$ given, for $i=1,\ldots,n\,$, by
\begin{equation} \label{fgp}
\pi_i(t):=\Big( \mathrm{D}_i\log\mathbf{G}(\mu(t)) + 1 - \sum_{j=1}^n \mu_j(t)\mathrm{D}_j \log\mathbf{G}(\mu(t)) \Big)\cdot \mu_i(t).
\end{equation}
Here and throughout the paper, we write $\mathrm{D}_i$ for the partial derivative with respect to the $i^\text{th}$ variable, and $\mathrm{D}^2_{ij}$ for the second partial derivative with respect to the $i^\text{th}$ and $j^\text{th}$ variables. 
Theorem 3.1 of \cite{f95} asserts that the performance of the wealth process corresponding to $\pi(\cdot)$, when measured relative to the market, satisfies the decomposition (often referred to as ``Fernholz's master equation")
\begin{equation} \label{master} 
\log\left(\frac{V^\pi(T)}{V^\mu(T)} \right) =\log \left(\frac{\textbf{G}(\mu(T))}{\textbf{G}(\mu(0))} \right) + \int_0^T \mathfrak{g}(t)\,\mathrm{d}t\,,\quad \mathbb{P}\text{-a.s.}
\end{equation}
The quantity 
\begin{equation} \label{drift}
\mathfrak{g}(t) := \frac{-1}{2\textnormal{\textbf{G}}(\mu(t))}\sum_{i,j=1}^n \mathrm{D}_{ij}^2 \textnormal{\textbf{G}}(\mu(t))\mu_i(t)\mu_j(t)\tau^\mu_{ij}(t)
\end{equation} 
is called the \emph{drift process} of the portfolio $\pi(\cdot)$.

%%%%%%%%
\subsection{Diversity-Weighted Portfolios}
\label{sec:dwp}
We consider now the \emph{diversity-weighted portfolio} with parameter $p\in\mathbb{R}$, defined as in (4.4) of \cite{fkk05}:
\begin{equation} \label{defdwp}
\pi^{(p)}_i(t)\,:=\,\frac{(\mu_i(t))^p}{\, \sum_{j=1}^n (\mu_j(t))^p\,}\,,\quad i=1,\ldots,n.
\end{equation}
One condition that was shown  to be sufficient for the existence of relative arbitrage in the model \eqref{model}, under the condition \eqref{ND}, is that of \emph{diversity} (see Corollary 2.3.5 and Example 3.3.3  
of \cite{f02}). 
This posits that no single company's capitalization can take up more than a certain proportion of the entire market. 

More formally, a market is said to be \emph{diverse} over the time-horizon $[0,T]$, for some real number $T>0$, if
\begin{equation} \label{D} \tag{D}
\exists \, \delta \in(0,1)\ \text{ such that: }\ \ \mathbb{P} \big( \mu_{(1)}(t)<1-\delta\, ,\ \ \forall \,t\in [0,T]\big) \,=\,1\,.
\end{equation}
Models of the form \eqref{model}, which satisfy the property \eqref{D} and admit local martingale deflators, were explicitly shown to exist in Remark 6.2 of \cite{fkk05}. Other constructions have been proposed by \cite{or06} and \cite{sar14}. %\footnote{\textcolor{red}{Regarding revision point 1.: I am quite confident that this covers all papers which make constructions of diverse markets. (AV)}} %~\footnote{\textcolor{red}{~ I have added a reference to Sarantsev's paper here, as well as one to \cite{or06} --- what is your opinion on this? (AV)}}~\footnote{~ Sure, why not. But the Sarntsev reference does not show; neither does the reference to Wong. There are question marks. (IK)}
\citet[see Eq.~(4.5)]{fkk05} showed that the portfolio \eqref{defdwp} outperforms the market index $\mu(\cdot)$ strongly, in markets satisfying \eqref{ND} and \eqref{D}, for any $p\in(0,1)$, and over time-horizons $[0,T]$ with 
\begin{equation} \label{oldT}
T\, \in \, \left( \frac{\,2\log n\,}{\varepsilon \delta p}\,, \,\infty\right).
\end{equation}

In Theorem \ref{thm:dwpneg} below, we show that a similar property holds for the diversity-weighted portfolio with \emph{negative} parameter $p$, in markets with the following \emph{no-failure} condition:
%formal Definition?
\begin{equation} \label{NF} \tag{NF}
\exists \,\varphi \in (0,1/n)\ \text{ such that }\,\,\,\mathbb{P}\big( \mu_{(n)}(t)> \varphi,\ \ \forall \,t\in[0,T]\big)\,=\,1\,.
\end{equation}
We note that this condition implies diversity with parameter $\delta = (n-1)\varphi\,.$%, namely, 
%\begin{equation} 
%\label{D2} 
 % \mathbb{P} \big( \mu_{(1)}(t)<1-(n-1)\varphi\,, \  \  \forall \,t\in [0,T]\big) \,=\,1\,.
%\end{equation}

%%%%%%%%%%%%%%%%%%%%%%%%%%%%%%%%%%%%%%%%%%%%%%%%%%%%%%%%%%%%%%%%%%%%%%
\section{Main Result}
\label{sec:main}
%%%%%%%%%%%%%%%%%%%%%%%%%%%%%%%%%%%%%%%%%%%%%%%%%%%%%%%%%%%%%%%%%%%%%%

Here is the main result of this paper. 

\begin{theorem} 
\label{thm:dwpneg}
In the market model \eqref{model}, and under the assumptions \eqref{ND} and \eqref{NF}, the diversity-weighted portfolio $\pi^{(p)}(\cdot)$ with parameter 
\begin{equation} 
\label{range}
p \,\in \,\left( \frac{\log n}{\, \log (n\varphi)\,}\,,\,0\right)
\end{equation}
 is a strong  arbitrage relative to the market $\mu(\cdot)$ over the time-horizon  $[0,T]$, for any real number   
\begin{equation} \label{dwphorizon}
T \,>\, \frac{-2n \log (n \varphi)}{\, \varepsilon (1-p) \big(n-(n \varphi)^p \big) \,}\,.
\end{equation}
\end{theorem}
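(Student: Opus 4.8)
The plan is to exhibit $\pi^{(p)}(\cdot)$ as a functionally-generated portfolio and then apply Fernholz's master equation \eqref{master}, showing that its right-hand side is almost surely strictly positive on the stated time-horizons, which is exactly the assertion of strong relative arbitrage. One checks directly from \eqref{fgp} that $\pi^{(p)}(\cdot)$ is generated by $\mathbf{G}(x)=\big(\sum_{j=1}^n x_j^p\big)^{1/p}$: indeed $x_i\mathrm{D}_i\log\mathbf{G}(x)=x_i^p/\sum_j x_j^p$ and $\sum_j x_j\mathrm{D}_j\log\mathbf{G}(x)=1$, so \eqref{fgp} collapses to \eqref{defdwp}. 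Since under \eqref{NF} all market weights lie in a compact subset of $\Delta^n_+$, the regularity hypotheses on $\mathbf{G}$ are satisfied.

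The next step is to identify the drift process \eqref{drift}. Computing $\mathrm{D}^2_{ij}\mathbf{G}$ and substituting, I expect the drift to collapse to $\mathfrak{g}(t)=(1-p)\,\gamma^*_{\pi^{(p)}}(t)$, by way of the algebraic identity $\sum_i\pi_i\tau^\mu_{ii}-\sum_{i,j}\pi_i\pi_j\tau^\mu_{ij}=2\gamma^*_\pi$, which follows upon expanding $\tau^\mu_{ij}=(\mu-e_i)'a(\mu-e_j)$ and comparing with \eqref{defexcess}. Because $p<0$ forces $1-p>1$, this drift is nonnegative, so \eqref{NDgamma} furnishes the pointwise lower bound $\mathfrak{g}(t)\geq\tfrac{\varepsilon(1-p)}{2}\big(1-\pi^{(p)}_{(1)}(t)\big)$.

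The heart of the argument consists of two uniform-in-$t$ estimates. For the largest portfolio weight, observe that $\pi^{(p)}_{(1)}=\mu_{(n)}^p\big/\sum_j\mu_j^p$, since $p<0$ makes the smallest market weight the most heavily held. Applying \eqref{NF} to the numerator ($\mu_{(n)}>\varphi\Rightarrow\mu_{(n)}^p<\varphi^p$) and the convexity of $x\mapsto x^p$ through Jensen's inequality to the denominator ($\sum_j\mu_j^p\geq n^{1-p}$) yields $\pi^{(p)}_{(1)}\leq(n\varphi)^p/n$; the parameter range \eqref{range} is precisely the condition rendering this bound strictly below $1$, whence $1-\pi^{(p)}_{(1)}(t)\geq\big(n-(n\varphi)^p\big)/n>0$. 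The very same two bounds, applied now to $\log\mathbf{G}=\tfrac{1}{p}\log\sum_j\mu_j^p$, control the boundary term, giving $\log\mathbf{G}(\mu(T))-\log\mathbf{G}(\mu(0))\geq\log(n\varphi)$.

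Assembling these estimates in \eqref{master} produces, almost surely,
\[
\log\!\Big(\tfrac{V^{\pi^{(p)}}(T)}{V^\mu(T)}\Big)\;\geq\;\log(n\varphi)+\frac{\varepsilon(1-p)\big(n-(n\varphi)^p\big)}{2n}\,T,
\]
and since $\log(n\varphi)<0$ (as $\varphi<1/n$), the right-hand side is strictly positive exactly when $T$ exceeds the threshold \eqref{dwphorizon}, establishing the claim. The main obstacle is calibrating the two estimates tightly: the precise parameter range \eqref{range} and time threshold \eqref{dwphorizon} emerge only because the \eqref{NF} upper bound $\sum_j\mu_j^p<n\varphi^p$ and the convexity lower bound $\sum_j\mu_j^p\geq n^{1-p}$ are deployed in the correct places, so the bookkeeping of which bound governs numerator versus denominator is where the care must go.
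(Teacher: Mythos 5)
Your proposal is correct and follows essentially the same route as the paper: the generating function $\mathbf{G}_p$, the drift identity $\mathfrak{g}=(1-p)\gamma^*_{\pi^{(p)}}$, the two-sided bounds on $\sum_j(\mu_j(t))^p$ under \eqref{NF}, the resulting bound $\pi^{(p)}_{(1)}\leq(n\varphi)^p/n<1$, and the assembly via the master equation \eqref{master} and \eqref{NDgamma}. The only cosmetic difference is that you obtain the lower bound $\sum_j\mu_j^p\geq n^{1-p}$ by Jensen's inequality where the paper uses Lagrange multipliers — the same elementary estimate either way.
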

%%%%%%%%%%%%%%%%%%%%%%%%%%%%%%%%%
\begin{proof}
We apply the theory of functionally-generated portfolios as in Section \ref{sec:master}. For any $p\neq 0$, it is checked that the portfolio \eqref{defdwp} is generated by the function
\begin{equation} 
\label{dwpgf}
\mathbf{G}_p\,:x\,\longmapsto \, \left(\sum_{i=1}^n x_i^{\,p}\right)^{1/p}\!.
\end{equation}
We apply the method of Lagrange multipliers to maximize this function $\mathbf{G}_p$ for $p<0$ subject to $\sum_{i=1}^n x_i=1$; this gives $x_i=1/n,\ i=1,\ldots,n.$ We may therefore write that, under \eqref{NF}, the generating function admits the lower and upper bounds 
\begin{equation} 
\label{gpbound}
n^{1-p}=\sum_{i=1}^n \bigg(\frac{1}{n}\bigg)^p \leq \sum_{i=1}^n \big(\mu_i(t)\big)^p = \big(\mathbf{G}_p(\mu(t)\big)^p < \sum_{i=1}^n \varphi^p = n\varphi^p.
\end{equation}
Hence, for any $T\in (0, \infty)$, we have the lower bound
\begin{equation} \label{boubound}
\log \left(\frac{\mathbf{G}_p(\mu(T))}{\mathbf{G}_p(\mu(0))} \right) > \log (n\varphi),
\end{equation}
a negative number.
Using assumption \eqref{NF} and the lower bound from \eqref{gpbound}, we obtain 
\begin{equation} 
\label{ineq}
\pi^{(p)}_{(1)}(t)=\frac{(\mu_{(n)}(t))^p}{\,\sum_{i=1}^n (\mu_i(t))^p\,} < \frac{\varphi^p}{n^{1-p}}=\frac{(n\varphi)^p}{n}<1,
\end{equation}
%sum needs i=1^n?
where the last (strict) inequality follows from (\ref{range}). 
%the fact that we have chosen $p\in(\log n/\log (n\varphi),0)$.
In conjunction  with \eqref{ineq}, the inequality \eqref{NDgamma}   gives %that 
\begin{equation} 
\label{lowerbd}
\int_0^T \gamma^*_{\pi^{(p)}}(t)\,\mathrm{d}t\, \geq \, \frac{\,\varepsilon\,}{2}\int_0^T\big(1-\pi^{(p)}_{(1)}(t)\big)\,\mathrm{d}t \, > \,  \frac{\,\varepsilon\,}{2}\,T \left(1-\frac{(n\varphi)^p}{n} \right).
\end{equation}
Finally, straightforward computation shows that the drift process of the portfolio \eqref{defdwp}, as defined in \eqref{drift}, is equal for all $p\in\mathbb{R}$   %\footnote{~ This holds even with $p=0$; please see (7.9)--(7.11), page 122 in the survey paper with Bob. (IK) \textcolor{red}{I have adjusted this --- should we include a reference to the survey paper? (AV)}} 
to
\begin{equation} \label{dwpdrift}
\mathfrak{g}_p(\cdot)=(1-p)\, \gamma^*_{\pi^{(p)}}(\cdot).
\end{equation}

 We apply now  \eqref{dwpdrift}, \eqref{boubound} and \eqref{lowerbd} to Fernholz's master equation \eqref{master}, and  conclude that the relative performance of $\pi^{(p)}(\cdot)$ over $[0,T]$, with respect to the market, is given by
% \footnote{~ Here again below, the last inequality became strict. Otherwise, not only do you not get strong arbitrage,   you do not get (simple) arbitrage either. (IK)}
 % (as in equation (11.1) of \cite{fkk05}) 
\begin{align} \label{masterdwp}
\log \bigg(\frac{V^{\pi^{(p)}}\!(T)}{V^\mu(T)} \bigg) &=\log \left(\frac{\mathbf{G}_p(\mu(T))}{\mathbf{G}_p(\mu(0))} \right) + (1-p)\int_0^T \gamma^*_{\pi^{(p)}}(t) \,\mathrm{d}t\\
&>\log (n\varphi) + (1-p)\,  \frac{\,\varepsilon\,}{2}\,T \left(1-\frac{(n\varphi)^p}{n} \right)\nonumber \\
& > 0,\qquad \mathbb{P}\text{-a.s.},\nonumber
\end{align}
provided   $T$ satisfies \eqref{dwphorizon}.
Hence the portfolio $\pi^{(p)}(\cdot)$  outperforms the market strongly  over sufficiently long time-horizons $[0,T]$, as indicated in \eqref{dwphorizon}.  %\qed
\end{proof}

%%%%%%%%%%%%%%%%%%%%%%%%%%
\section{Rank-Based Variants}
\label{sec:rank}
%%%%%%%%%%%%%%%%%%%%

Real markets typically do \emph{not} satisfy 
the \eqref{NF} assumption (as stocks can crash), so it is desirable to weaken this condition    of Theorem \ref{thm:dwpneg}. One possible modification of \eqref{NF}, is to posit that ``no-failure" only holds for the $\mathfrak{m}$ stocks ranked highest by capitalization, for some fixed $2<\mathfrak{m}<n$, namely
\begin{equation}
\label{LF} 
\tag{LF}
\exists \, \kappa \in (0,1/\mathfrak{m})    
\ \text{ such that }\ \,\mathbb{P} \big( \,\mu_{(\mathfrak{m})}(t)> \kappa\,,  \,\, \forall \,\,t\in[0,T]\,\big) \,=\,1\,.
\end{equation}

We call this condition 
\eqref{LF} the \emph{limited-failure} condition, as it postulates that no more than $n-\mathfrak{m}$ companies will ``go bankrupt" by time $T$. In this context, bankruptcy of company $i$ during the time-horizon $[0,T]$ is defined as the event $\{ \exists \, \,t\in[0,T]$ such that $\mu_i(t)\leq \kappa\}$.   %} %good definition?
 We  also note that \eqref{LF} implies the diversity condition with parameter $(\mathfrak{m}-1)\kappa\,.$   
 
 \begin{remark}
 %\footnote{\textcolor{red}{I have decided to include Vassilios's remark here -- please check. (AV)}}
 As was noted by V. Papathanakos (private communication), the diversity condition \eqref{D} with parameter $\delta\in(0,1)$ in turn implies \eqref{LF} with parameters 
\begin{equation} \label{Vassilios}
\mathfrak{m}=\floor*{\frac{1}{1-\delta}}\quad \text{ and }\quad \kappa = \frac{1-(\mathfrak{m}-1)(1-\delta)}{n-(\mathfrak{m}-1)}\, <\, \frac{1}{\frak{m}}\, ,
\end{equation} 
with $\floor*{x}$ the largest integer less than or equal to $x\in\mathbb{R}$. To see this, we note that, as long as $(k-1)(1-\delta)<1$, for $1<k\leq n$, we have the implication
\begin{equation}\label{Vassilios2}
\mu_{(k-1)}(t)<1-\delta\quad \Rightarrow\quad \mu_{(k)}(t) > \frac{1-(k-1)(1-\delta)}{n-(k-1)}\, .
\end{equation}
We identify the inequality on the right hand side of \eqref{Vassilios2} as a version of \eqref{LF}. The largest integer for which this lower bound is positive, is $k=\floor*{1/(1-\delta)}$, giving \eqref{Vassilios}. \qed
 \end{remark}

%%%%%%%%%%%%%%%%%%%%%%%%%%
\subsection{A Diversity-Weighted Portfolio of Large Stocks}
%\label{sec:rank}
%%%%%%%%%%%%%%%%%%%%

Under the assumption \eqref{LF}, one can attempt to construct a relative arbitrage using a variant (introduced in Example 4.2 of \cite{f01})  of the diversity-weighted portfolio with parameter $p=r\in\mathbb{R}$ 
 which only invests in the $m=\mathfrak{m}\,$ highest-ranked stocks (and thus naturally avoids investing in ``crashing" stocks), namely:
\begin{equation} 
\label{largedwp}
\pi^\#_{p_t(k)}(t)\,=\,\begin{dcases} 
\frac{(\mu_{(k)}(t))^r}{\,\sum_{\ell=1}^m (\mu_{(\ell)}(t))^r\,}\, , &k=1,\ldots,m,\\
0, &k=m+1,\ldots,n.
\end{dcases}
\end{equation}
Here, $p_t(k)$ is the index of the stock   ranked $k$\textsuperscript{th} at time $t$ (with ties  resolved ``lexicographically" once again,  by choosing the lowest index), so that $\mu_{p_t(k)}(t)=\mu_{(k)}(t)$.

\smallskip
We shall denote by $\mathfrak{L}^{k,k+1}(t)\equiv \Lambda^{\Xi_k}(t)$   the semimartingale local time  accumulated {\it at the origin,}  over the time-interval $[0,t]$,  by the continuous, non-negative semimartingale
\begin{equation}
\Xi_k(\cdot)\,:=\,\log\big(\mu_{(k)}(\cdot)/\mu_{(k+1)}(\cdot)\big),\qquad k=1,\ldots,n-1.
\end{equation}

\begin{definition} 
Let us consider the local times at the origin of all  continuous, nonnegative semimartingales of the form 
$$
\log\big(\mu_{(k)}(\cdot)/\mu_{(k+r)}(\cdot)\big)\,, \qquad k=1, \ldots, n-r\,, \,\,\, r \ge 2\,,
$$
and call them \emph{``higher-order collision local times"}.   \qed
 \end{definition} 

We shall assume throughout this section the following:%~\footnote{\textcolor{red}{~ I have changed Assumption 1 a bit, to separate the text (``we assume throughout this section") from the actual assumption, for clean referencing.} \\ Yes, this is fine. (IK)}

\begin{assumption} 
\label{assume:evan}   
All   higher-order collision local times  vanish.   \qed
 \end{assumption}

The reader should consult \cite{bg08} for general theory on ranked semimartingales, as well as \cite{ipbkf11} for sufficient conditions that ensure the evanescence of such higher-order collision local times, as posited in Assumption \ref{assume:evan}.  

\smallskip
When attempting to construct a relative arbitrage with the portfolio \eqref{largedwp},  one encounters a problem. To wit: an application of Theorem 3.1 of \cite{f01} asserts that the following master equation holds for this rank-based portfolio:
\begin{align} 
\label{rankmaster}
\log \bigg(\frac{V^{\pi^\#}\!(T)}{V^\mu(T)}\bigg) = \log \left(\frac{\mathcal{G}^\#_r(\mu(T))}{\mathcal{G}^\#_r(\mu(0))}\right) &+ (1-r)\int_0^T\gamma^*_{\pi^\#}(t)\,\mathrm{d}t\\ &- \int_0^T \frac{\pi^\#_{p_t(m)}(t)}{2} \, \mathrm{d}\mathfrak{L}^{m,m+1}(t),\nonumber
\end{align}
with $\mathcal{G}^\#_r$ the generating function of $\pi^\#(\cdot)$; compare with \eqref{masterdwp}. Due to the unbounded nature of semimartingale  local time, the final term in \eqref{rankmaster} (referred to as ``leakage" by \cite{f01}) admits no obvious almost sure bound. Thus there is no lower bound on the market-relative performance of $\pi^\#(\cdot)$ that holds under reasonable conditions. 

Since in real markets this local time term is typically small, we do still expect this portfolio to have a good performance, and therefore include it in our empirical study --- see Section \ref{sec:empiric}.

%%%%%%%%%%%%%%%%%%%%%%%%%%
\subsection{A Diversity-Weighted Portfolio of Small Stocks}
%\label{sec:rank}
%%%%%%%%%%%%%%%%%%%%

Let us then fix $1\leq m<n$, and define a \emph{small-stock} diversity-weighted portfolio by   % \footnote{~ Big change in the summand here below. From $\sum_{\ell=1}^m$ to $\sum_{\ell=m+1}^n$. Please check. (IK) \textcolor{red}{I copied the notation from \eqref{largedwp} and forgot to change the summand, apologies.  (AV)}}
\begin{equation}
 \label{smalldwp}
\pi^\flat_{p_t(k)}(t)\,=\,\begin{dcases} 
0, &k=1,\ldots,m,\\
\frac{(\mu_{(k)}(t))^r}{\sum_{\ell=m+1}^n (\mu_{(\ell)}(t))^r}\,, &k=m+1,\ldots,n.
\end{dcases}
\end{equation}
With this new dispensation,  the local time term in \eqref{rankmaster} changes sign for $\pi^\flat(\cdot)$ (see equation \eqref{rankmastersmall} below) and the   problems mentioned in the previous subsection disappear. In fact,  we can show that the new portfolio in \eqref{smalldwp} outperforms the market under the assumptions \eqref{ND}, \eqref{LF}, for certain positive values of the parameter $r\,$; as well as under the assumptions \eqref{ND},  \eqref{NF},  when $r$ is within an appropriate  range of negative values.

%%%%%%%%%%%%%%%%%%%%%%%%%%%%%%%%%%%%%%%%%%%%%%%%%%%%%%%%%%%%%%%%%%%%%%
\begin{proposition} 
\label{prop:rankdwpneg}
We place ourselves in the context of the market model \eqref{model}, and under   Assumption \ref{assume:evan} and the condition  \eqref{ND}.  

\medskip
\noindent
 \textnormal{(i)} 
 If the condition \eqref{LF} also holds, the small-stock diversity-weighted portfolio of \eqref{smalldwp} with parameter 
\begin{equation}
\label{eq: Range1}
r\in \left(-\,\frac{\log(2)}{\,\log((m+1)\kappa)\,}\,,\,1\right) \qquad \text{and} \qquad m=\mathfrak{m}-2
\end{equation}
is a strong relative arbitrage with respect to the market  $\mu(\cdot)$ over the time-horizon  $[0,T]$, provided that  
\begin{equation} 
\label{rankdwphorizon+}
\kappa<\frac{1}{2(m+1)}  \quad \text{and} \quad
T> T^\flat_+ :=  \frac{\,4 \left[ \,\log\left(\frac{n-m}{2}\right)-r\log \big( (m+1) \kappa \big)\right]\,}{\varepsilon r(1-r) \big(2-  (m+1)^{-r}\kappa^{-r} \big)}\, .
\end{equation}
\textnormal{(ii)} If  \eqref{NF} also holds,   the small-stock diversity-weighted portfolio of \eqref{smalldwp} with parameter  % \footnote{~ A significant change below. Please check. (IK) \textcolor{red}{Checked --- correct. (AV)}}
\begin{equation}
\label{eq: Range2}
r\in\left(\frac{\, \log(n-m)\,}{\,\log( (m+1) \varphi )\,}\,,\,0\right)
\end{equation}
 is a strong  arbitrage relative to the market $\mu(\cdot)$ over the time-horizon $[0,T]$, provided  % \footnote{~ A somewhat significant change below. Please check. (IK) \textcolor{red}{Checked. (AV)}}
\begin{equation}
 \label{rankdwphorizon-}
T\, >\, T^\flat_- \,:=\,\frac{-2(n-m)\log((m+1) \varphi )}{\, \varepsilon (1-r) \big(n-m-(m+1)^r\varphi^r\big)\,}\,.
\end{equation}
\end{proposition}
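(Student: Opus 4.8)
The plan is to handle both parts with the rank-based functionally-generated portfolio machinery of \cite{f01}, in close parallel to the proof of Theorem \ref{thm:dwpneg}. First I would identify the generating function of the small-stock portfolio \eqref{smalldwp} as the rank-based map $\mathcal{G}^\flat_r(x)=\big(\sum_{\ell=m+1}^n x_{(\ell)}^{\,r}\big)^{1/r}$, and invoke Theorem 3.1 of \cite{f01} to obtain a master equation of the form
\begin{align*}
\log\!\bigg(\frac{V^{\pi^\flat}(T)}{V^\mu(T)}\bigg)=\log\!\bigg(\frac{\mathcal{G}^\flat_r(\mu(T))}{\mathcal{G}^\flat_r(\mu(0))}\bigg)&+(1-r)\int_0^T\gamma^*_{\pi^\flat}(t)\,\mathrm{d}t\\
&+\frac12\int_0^T\pi^\flat_{p_t(m+1)}(t)\,\mathrm{d}\mathfrak{L}^{m,m+1}(t),
\end{align*}
the crucial point being that for the small-stock portfolio the ``leakage'' local-time term at the $(m,m+1)$ rank boundary now carries a $+$ sign. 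Since this term is non-negative (local time and portfolio weights are both non-negative), I would simply discard it, reducing matters to bounding from below the generating-function term (a constant) and the excess-growth term (which should grow linearly in $T$), exactly as in \eqref{masterdwp}.

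For part (i) I would work under \eqref{LF} with $r>0$, where $1-r>0$ keeps the excess-growth term favorable. I would control that term through \eqref{NDgamma}: for $r>0$ the largest portfolio weight sits on the highest-ranked stock held, so $\pi^\flat_{(1)}=\pi^\flat_{p_t(m+1)}$, and I would bound it above using $\mu_{(m+1)}\le 1/(m+1)$ in the numerator together with $\mu_{(m+1)},\mu_{(m+2)}>\kappa$ (which hold because $m=\mathfrak{m}-2$) for two terms of the denominator, obtaining $\pi^\flat_{(1)}<\tfrac12\big((m+1)\kappa\big)^{-r}$ and hence $1-\pi^\flat_{(1)}>\tfrac12\big(2-((m+1)\kappa)^{-r}\big)$. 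The restriction $\kappa<1/(2(m+1))$ and the prescribed range of $r$ are precisely what make this lower bound \emph{positive}. Estimating the generating-function term from below by $\log((m+1)\kappa)-\tfrac1r\log\!\big(\tfrac{n-m}{2}\big)$ (via the same two-sided bounds on $\sum_{\ell=m+1}^n\mu_{(\ell)}^r$), and combining, I expect the relative log-performance to exceed $0$ for all $T>T^\flat_+$.

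Part (ii) is even closer to Theorem \ref{thm:dwpneg}: under \eqref{NF} with $r<0$ the largest portfolio weight now sits on the \emph{lowest}-ranked stock, so $\pi^\flat_{(1)}=\pi^\flat_{p_t(n)}$. Using $\mu_{(n)}>\varphi$ in the numerator and $\mu_{(\ell)}\le 1/(m+1)$ across the $n-m$ denominator terms, I would get $1-\pi^\flat_{(1)}>\big(n-m-((m+1)\varphi)^r\big)/(n-m)$, which is positive exactly on the stated range of $r$. A two-sided estimate of $\sum_{\ell=m+1}^n\mu_{(\ell)}^r$, with each $\mu_{(\ell)}\in(\varphi,1/(m+1)]$, should collapse the generating-function term to the clean lower bound $\log((m+1)\varphi)$, after which the master equation yields positivity for $T>T^\flat_-$, mirroring \eqref{masterdwp} line for line.

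The main obstacle is conceptual rather than computational, and it is the local-time term. I would need to justify that the rank-based master equation really contains only the single boundary term $\mathrm{d}\mathfrak{L}^{m,m+1}$, with the favorable sign: this is exactly where Assumption \ref{assume:evan} enters, guaranteeing that all higher-order collision local times vanish so that no further local-time contributions appear at the internal rank boundaries $(\ell,\ell+1)$ with $\ell>m$. Once the sign and the absence of extra terms are secured, the remaining work is routine two-sided bookkeeping on the quantities $\mu_{(\ell)}^r$, the only genuine subtlety being to track which rank carries the maximal portfolio weight — the top of the held range when $r>0$, the bottom when $r<0$ — since this flips between the two parts and drives the two different parameter ranges.
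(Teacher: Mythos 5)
Your proposal is correct and follows essentially the same route as the paper's own proof: the identical generating function and rank-based master equation from Theorem 3.1 of \cite{f01}, discarding the non-negative local-time term at the $(m,m+1)$ boundary, the same upper bounds on $\pi^\flat_{(1)}$ (at rank $m+1$ for $r>0$, at rank $n$ for $r<0$), the same two-sided bounds on $\big(\mathcal{G}_r(\mu(t))\big)^r$, and the same algebra yielding the thresholds $T^\flat_+$ and $T^\flat_-$. The only cosmetic difference is that you write the leakage coefficient as $\pi^\flat_{p_t(m+1)}$ where the paper writes $\pi^\flat_{p_t(m)}$ (your indexing is arguably the more natural one, since the rank-$m$ weight of $\pi^\flat(\cdot)$ is identically zero); as both versions are non-negative and immediately discarded, nothing in the argument changes.
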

%%%%%%%%%%%%%%%%%%%%%%%%%%%%%%%%%
\begin{proof} 
The portfolio \eqref{smalldwp} is generated by the function 
\begin{equation}
\mathcal{G}_r(x)\,:=\,\left(\,\sum_{\ell=m+1}^n \big(x_{(\ell)}\big)^r \right)^{1/r}\!,
\end{equation} 
a rank-based variant of \eqref{dwpgf}. In a manner analogous to \eqref{rankmaster}, Theorem 3.1 of \cite{f01} implies the following master equation for the performance of our small-stock portfolio:
\begin{align} 
\label{rankmastersmall}
\log \bigg(\frac{V^{\pi^\flat}\!(T)}{V^\mu(T)}\bigg) \, = \, \log \left(\frac{\mathcal{G}_r(\mu(T))}{\mathcal{G}_r(\mu(0))}\right) &+ (1-r)\int_0^T\gamma^*_{\pi^\flat}(t)\,\mathrm{d}t\\ &+ \int_0^T \frac{\pi^\flat_{p_t(m)}(t)}{2} \, \mathrm{d}\mathfrak{L}^{m,m+1}(t).\nonumber
\end{align}
We note the change of sign  for the last term, when juxtaposed with the analogue \eqref{rankmaster} of this equation for the large-stock portfolio in \eqref{largedwp}.

%%%%%
\paragraph{Case (i):} With  $r\in \big(-\log(2)/\log((m+1)\kappa),1\big)\,$ as in (\ref{eq: Range1}), we have  
\begin{equation} \label{r+below}
\kappa<\frac{1}{2(m+1)}\ \ \ \ \Longrightarrow\ \ \ \ r> -\frac{\log(2)}{\log\big((m+1)\kappa\big)}>0\,.
\end{equation}
Recalling (LF) and $\, m = \mathfrak{m} -2\,$, we have the following bounds 
\begin{align} 
\label{grbound+}
2\kappa^r &<  (\mu_{(\mathfrak{m}-1)}(t))^r+(\mu_{(\mathfrak{m})}(t))^r+ \sum_{\ell=\mathfrak{m}+1}^n (\mu_{(\ell)}(t))^r\nonumber\\
&= \sum_{\ell=m+1}^n (\mu_{(\ell)}(t))^r = \big(\mathcal{G}_r(\mu(t))\big)^r\\
&\leq \sum_{\ell=m+1}^n \bigg( \frac{1}{\, m+1\,}\bigg)^r = (n-m)\,\big( m+1 \big)^{-r} \nonumber
\end{align}
for the generating function $\mathcal{G}_r\,$, and  lead to the lower bound 
\begin{equation} 
\label{bouboundr+}
\log \left(\frac{\mathcal{G}_r(\mu(T))}{\mathcal{G}_r(\mu(0))}\right) > \log \big( (m+1)\kappa \big) -\frac{1}{\,r\,}\log\bigg(\frac{n-m}{2}\bigg).
\end{equation}
Note that, using the lower bound in \eqref{grbound+} and %the fact that 
$\mu_{(k)}(t)\leq 1/k\,$, $\,k=1,\ldots,n\,$, we obtain
\begin{equation} 
\label{pirbound+}
\pi^\flat_{(1)}(t)=\frac{(\mu_{(m+1)}(t))^r}{\sum_{\ell=m+1}^n (\mu_{(\ell)}(t))^r} < \frac{(m+1)^{-r}}{2\kappa^r} < 1,
\end{equation}
where the last bound follows from the inequalities in \eqref{r+below}.   

\medskip
The non-degeneracy condition \eqref{ND} now gives
\begin{equation} \label{lowerbdr+}
\int_0^T \gamma^*_{\pi^\flat}(t)\,\mathrm{d}t \, \geq \, \frac{\varepsilon}{2}\int_0^T \big(1-\pi^\flat_{(1)}(t)\big)\,\mathrm{d}t \,> \,\frac{\,\varepsilon\,}{2} \,T\bigg(1-\frac{1}{2(m+1)^r\kappa^r}\bigg) 
\end{equation}
 in conjunction with  \eqref{NDgamma} and \eqref{pirbound+}. Whereas the nonnegativity of $\pi^\flat_{p_t(m)}( t)\,,$ coupled with the nondecrease of the local time  $\mathfrak{L}^{m,m+1}(\cdot) ,$  shows that 
\begin{equation} \label{localtimebd}
\int_0^T \frac{\pi^\flat_{p_t(m)}(t)}{2}\,\mathrm{d}\mathfrak{L}^{m,m+1}(t)\,\geq \, 0\,.
\end{equation}
We use this and apply \eqref{bouboundr+} and \eqref{lowerbdr+} to the rank-based master equation \eqref{rankmastersmall}, to obtain
\begin{align} \label{masterrankdwp+}
\log \bigg(\frac{V^{\pi^\flat}\!(T)}{V^\mu(T)}\bigg) &> \log \big( (m+1)\kappa \big)-\frac{1}{\,r\,}\log\bigg(\frac{n-m}{2}\bigg)  \\
& \quad +\frac{\, \varepsilon \,}{2}(1-r)\,T\bigg(1-\frac{1}{2(m+1)^r\kappa^r}\bigg) \nonumber\\
&> 0\,,\qquad \mathbb{P}\text{-a.s.},\nonumber
\end{align}
if and only if $T > T^\flat_+$ as defined in \eqref{rankdwphorizon+}. We conclude that, under these conditions, $\pi^\flat(\cdot)$ strongly outperforms the market over the horizon  $[0,T]$.   

%%%%%
\paragraph{Case (ii):} With $r\in \big( \log(n-m)/\log(n\kappa),0\big)$ and 
under the condition \eqref{NF}, we have the following bounds%, similar to \eqref{gpbound}, hold: 
%\footnote{~ Changes on this page: Replaced $\kappa$ by $\varphi$ below, and replaced $n-m$ by $\log(n-m)$ on the line just above. Also, a change in the first inequality right below, with repercussions downstream. Please check. (IK) \textcolor{red}{I agree that the first inequality in (44) needed to be changed, for the same reason as (38). I have checked the implications and agree with your modifications. (AV)}}
\begin{equation}
 \label{grbound-}
(n-m)\, (m+1)^{-r} \leq  \sum_{\ell=m+1}^n (\mu_{(\ell)}(t))^r = \big(\mathcal{G}_r(\mu(t))\big)^r < (n-m)\,\varphi^r\,.
\end{equation}
For the first inequality, we have used the simple fact that $\,\mu_{(\ell)}(t) \le \mu_{(m+1)}(t) \le 1 / (m+1)\,$ holds for $\, \ell = m+1, \cdots, n\,$. 
Whereas, from (\ref{eq: Range2}),  we have
\begin{equation} \label{pirbound-}
\pi^\flat_{(1)}(t)=\frac{(\mu_{(n)}(t))^r}{\sum_{\ell=m+1}^n (\mu_{(\ell)}(t))^r} < \frac{\varphi^r}{\,(n-m)\, (m+1)^{-r}\,}< 1 
\end{equation}
by analogy with \eqref{ineq}. The inequalities \eqref{grbound-} lead to the lower bound 
\begin{equation} \label{bouboundr-}
\log \left(\frac{\mathcal{G}_r(\mu(T))}{\mathcal{G}_r(\mu(0))}\right) >  \log \big( (m+1) \varphi \big)\,;
\end{equation}
and the non-degeneracy condition \eqref{ND}, in conjunction with  \eqref{NDgamma} and the upper bound \eqref{pirbound-} on the largest portfolio weight $\pi^\flat_{(1)}(\cdot)$, give
\begin{equation} \label{lowerbdr-}
\int_0^T \gamma^*_{\pi^\flat}(t)\,\mathrm{d}t \, \geq \, \frac{\,\varepsilon\,}{2}\int_0^T \big(1-\pi^\flat_{(1)}(t)\big)\,\mathrm{d}t \, >\, \frac{\,\varepsilon\,T\,}{2}  \left(1-\frac{\, (m+1)^r \varphi^r\,}{\,n-m\,}\right).
\end{equation}
Again, we use \eqref{localtimebd} and apply \eqref{bouboundr-} and \eqref{lowerbdr-} to the master equation \eqref{rankmastersmall}, to obtain
\begin{align} 
\label{masterrankdwp-}
\log \bigg(\frac{V^{\pi^\flat}\!(T)}{V^\mu(T)}\bigg) &> \log \big( (m+1) \varphi \big) + \frac{\,\varepsilon\,T\,}{2}(1-r) \left(1-\frac{\, (m+1)^r \varphi^r\,}{n-m}\right) \\
&> 0,\qquad \mathbb{P}\text{-a.s.},\nonumber
\end{align}
provided $T > T^\flat_-$ as defined in \eqref{rankdwphorizon-}. 

Hence this small-stock, negative-parameter portfolio $\pi^\flat(\cdot)$, outperforms the market over the horizon  $[0,T]$.
%\qed
\end{proof}

%%%%%%%%%%%%%%%%%%%%%%%%%%%%%%%%%%%%%%%

%%%%%%%%%%%%%%%%%%%%%%%%%%
%\subsection{Arbitrary Time Horizons}
%\label{sec:rank}
%%%%%%%%%%%%%%%%%%%%
%%%%%%%%%%%%%%%%%%%%%%%%%%%%%%%

%\footnote{~ I felt this observation fits here  much better than at the end of the paper. Please check. (IK) \textcolor{red}{I suggest we move this to the end of Section 5, as Section 4 is about rank-based portfolios, which the mirror portfolios are not. See Section 5.2 below. (AV)}} 

%%%%%%%%%%%%%%%%%%%%%%%%%%%%%%%%
\section{Further Considerations}
\label{sec:further}
%%%%%%%%%%%%%%%%%%%%%%%%

We  present now some other results, the first of which shows that the diversity-weighted portfolio \eqref{defdwp} with $p\in(0,1)$ is outperformed by its negative-para-meter counterpart under sufficient conditions. We also study a particular combination of these two types of diversity-weighted portfolios, showing that it outperforms a non-degenerate diverse market --- see Proposition \ref{dwpmix} further below.

%%%%%%%%%%%%%%%%%%%%%%%%%%
\subsection{Negative vs. Positive Parameter}
%\label{sec:rank}
%%%%%%%%%%%%%%%%%%%%

In the following Proposition \ref{dwpposneg}, besides \eqref{ND} we will impose also an \emph{upper} bound on the eigenvalues of the covariance matrix $a(\cdot)$, in the form of the \emph{bounded variance} assumption:
\begin{equation} \label{BV} \tag{BV}
\exists \,K>0\ \text{ such that:}\ \ \xi'\sigma(t)\sigma'(t)\xi \leq K ||\xi||^2,\quad \forall \, \xi\in\mathbb{R}^n,\ \,t\geq0\,; \quad\mathbb{P}\text{-a.s.}
\end{equation} %too wide for tag to be on the same line 
By Lemma 3.5 of \cite{fk09}, the bounded variance condition \eqref{BV} implies that for long-only portfolios $\pi(\cdot)$ we have the almost sure inequality
\begin{equation} \label{BVgamma}
\gamma^*_\pi(t)\leq 2K\big(1-\pi_{(1)}(t)\big),\quad \forall \, t\geq0.
\end{equation}

%%%%%%%%%%%%%%%%%%%%%%%%%%%%%%%%%%%%%%%%%%%%%%%%%%%%%%%%%%%%%%%%%%%%%%

\begin{proposition} 
\label{dwpposneg}
Let us place ourselves in the market model \eqref{model},   under the conditions \eqref{ND}, \eqref{BV} and \eqref{NF}.

The diversity-weighted portfolio $\pi^{(p^-)}(\cdot)$ with negative parameter $$p^-\in \left( \frac{\log n}{\,\log (n\varphi)\,}\,,\,0\right)$$ is then a strong arbitrage relative to the   diversity-weighted portfolio $\pi^{(p^+)}(\cdot)$ with positive parameter
\begin{equation}
\label{p+min}
p^+\in \bigg( \max \Big\{ 0, 1-\frac{\varepsilon (n-(n\varphi)^{p^-})(1-p^-)}{4K(n-1)} \Big\} , 1 \bigg)\, ,
\end{equation}
over any horizon $[0,T]$ of length 
\begin{equation} 
\label{dwphorizon2}
T > \frac{-2\log (n\varphi)}{C}\, .
\end{equation}
Here the positive constant $C$ is defined as
\begin{equation} \label{defC}
C\,:=\,\frac{\,\varepsilon\,}{2}\left(1-\frac{\,(n\varphi)^{p^-}\,}{n}\right) \big(1-p^-\big)-\frac{\,2K\,}{n}\, (n-1)(1-p^+)\,.
\end{equation}
\end{proposition}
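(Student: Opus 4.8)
The plan is to exploit that \emph{both} diversity-weighted portfolios are functionally generated --- $\pi^{(p^-)}(\cdot)$ by $\mathbf{G}_{p^-}$ and $\pi^{(p^+)}(\cdot)$ by $\mathbf{G}_{p^+}$, with $\mathbf{G}_p$ as in \eqref{dwpgf} --- and to write the quantity we must sign as a \emph{difference} of two instances of Fernholz's master equation \eqref{master}. Subtracting the decomposition for $\pi^{(p^+)}(\cdot)$ from that for $\pi^{(p^-)}(\cdot)$, and using \eqref{dwpdrift} to replace each drift by $(1-p)\gamma^*$, I obtain $\mathbb{P}$-a.s.
\begin{multline*}
\log\!\bigg(\frac{V^{\pi^{(p^-)}}\!(T)}{V^{\pi^{(p^+)}}\!(T)}\bigg) = \log\frac{\mathbf{G}_{p^-}(\mu(T))}{\mathbf{G}_{p^-}(\mu(0))} - \log\frac{\mathbf{G}_{p^+}(\mu(T))}{\mathbf{G}_{p^+}(\mu(0))} \\ + (1-p^-)\int_0^T \gamma^*_{\pi^{(p^-)}}(t)\,\mathrm{d}t - (1-p^+)\int_0^T \gamma^*_{\pi^{(p^+)}}(t)\,\mathrm{d}t.
\end{multline*}
It then suffices to bound the four pieces so that the right-hand side dominates $2\log(n\varphi) + CT$, with $C$ as in \eqref{defC}.

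The two terms carrying $p^-$ are already controlled inside the proof of Theorem \ref{thm:dwpneg}: the chain \eqref{gpbound}--\eqref{boubound} gives the boundary contribution $\log\big(\mathbf{G}_{p^-}(\mu(T))/\mathbf{G}_{p^-}(\mu(0))\big) > \log(n\varphi)$, while \eqref{lowerbd} furnishes $(1-p^-)\int_0^T \gamma^*_{\pi^{(p^-)}}(t)\,\mathrm{d}t > \tfrac{\varepsilon}{2}(1-p^-)\big(1-(n\varphi)^{p^-}/n\big)T$. These supply, respectively, the constant $\log(n\varphi)$ and the positive first term of $C$.

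The new work concerns $\pi^{(p^+)}(\cdot)$, which enters with a minus sign, so each of its pieces must be bounded from \emph{above}. For its drift I would invoke the bounded-variance assumption \eqref{BV} via \eqref{BVgamma}: together with the elementary lower bound $\pi^{(p^+)}_{(1)}(t) = (\mu_{(1)}(t))^{p^+}/\sum_i (\mu_i(t))^{p^+} \ge n^{-p^+}/n^{1-p^+} = 1/n$ (here $\mu_{(1)}\ge 1/n$, and $\sum_i (\mu_i)^{p^+}\le n^{1-p^+}$ is the maximum of $\mathbf{G}_{p^+}^{p^+}$ at the uniform vector), this yields $(1-p^+)\int_0^T \gamma^*_{\pi^{(p^+)}}(t)\,\mathrm{d}t \le \tfrac{2K(n-1)(1-p^+)}{n}T$, precisely the negative second term of $C$. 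For the boundary term of $\pi^{(p^+)}(\cdot)$ the key is to use \eqref{NF} \emph{symmetrically}: the lower bound $\mu_i > \varphi$ for all $i$ gives $\sum_i (\mu_i(0))^{p^+} > n\varphi^{p^+}$, hence $\mathbf{G}_{p^+}(\mu(0)) > n^{1/p^+}\varphi$, while $\mathbf{G}_{p^+}(\mu(T)) \le n^{(1-p^+)/p^+}$; subtracting logarithms gives $\log\big(\mathbf{G}_{p^+}(\mu(T))/\mathbf{G}_{p^+}(\mu(0))\big) < -\log(n\varphi)$. Thus the two boundary contributions combine to exceed $\log(n\varphi)-(-\log(n\varphi)) = 2\log(n\varphi)$, and the whole right-hand side exceeds $2\log(n\varphi)+CT$, which is positive exactly when $T > -2\log(n\varphi)/C$, i.e. \eqref{dwphorizon2}.

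I expect the main obstacle to be ensuring $C>0$, since the argument collapses otherwise. This is precisely where the admissible range \eqref{p+min} for $p^+$ enters: it is the algebraic rearrangement (using $n\big(1-(n\varphi)^{p^-}/n\big) = n-(n\varphi)^{p^-}$) of the requirement that the \eqref{ND}-driven excess growth of $\pi^{(p^-)}(\cdot)$ outweigh the \eqref{BV}-bounded excess growth of $\pi^{(p^+)}(\cdot)$ --- forcing $p^+$ close enough to $1$ that $\pi^{(p^+)}(\cdot)$ hugs the market and contributes little drift. The conceptual subtlety worth flagging is the symmetric use of \eqref{NF} on the positive-parameter boundary term: it is not enough to bound $\mathbf{G}_{p^+}(\mu(0))$ below by $1$; one needs the sharper, $\varphi$-dependent lower bound to recover the clean factor $2\log(n\varphi)$ and hence the stated horizon.
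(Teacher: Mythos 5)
Your proposal is correct and takes essentially the same route as the paper's proof: you combine the two instances of Fernholz's master equation, reuse the Theorem \ref{thm:dwpneg} bounds (via \eqref{ND} and \eqref{NF}) for the $p^-$ terms, bound the $p^+$ drift via \eqref{BVgamma} with $\pi^{(p^+)}_{(1)}(t)\geq 1/n$, and use the reversed, $\varphi$-dependent bounds on $\mathbf{G}_{p^+}$ to get the boundary contribution $-\log(n\varphi)$, exactly as in the paper's \eqref{gposbound}--\eqref{dwpcomp}. The only cosmetic difference is that you derive $\pi^{(p^+)}_{(1)}(t)\geq 1/n$ from power-sum estimates, whereas it holds trivially for \emph{any} long-only portfolio, since the largest of $n$ nonnegative weights summing to one is at least $1/n$.
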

%%%%%%%%%%%%%%%%%%%%%%%%%%%%%%%%%
\begin{proof}
To simplify notation a bit, we write $\pi^\pm(\cdot)$ and $\mathbf{G}_\pm$ for $\pi^{(p^\pm)}(\cdot)$ and $\mathbf{G}_{p^\pm}$, respectively. Note that for $p^+>0$ the inequalities in \eqref{gpbound} reverse, i.e., 
\begin{equation} \label{gposbound}
n\varphi^{p^+} < \big(\mathbf{G}_+(\mu(t))\big)^{p^+} \leq n^{1-p^+},
\end{equation}
which again gives the lower bound of \eqref{boubound} for $p=p^+$. 
Using \eqref{BVgamma} with the observation   $\pi^+_{(1)}(t)\geq 1/n$, we get that 
\begin{equation} \label{gammaplus}
\int_0^T \gamma^*_{\pi^+}(t) \,\mathrm{d}t  \, \leq  \, 2K\int_0^T \big(1-\pi^+_{(1)}(t)\big)\,\mathrm{d}t \, \leq \, 2K T\big(1-(1/n) \big).
\end{equation}
Hence, recalling \eqref{dwpdrift}, we see that by virtue of \eqref{gposbound}~%\footnote{~This inequality seems to be going in the opposite sense, from the one right below. I am confused here. You probably meant to refer to (50). (IK) \textcolor{red}{You are right, the reference was wrong and I have changed it to (50). (AV)}}  
and \eqref{gammaplus} the master equation \eqref{master} for $\pi(\cdot)=\pi^+(\cdot)$ leads to the upper bound%ed from above as follows:
\begin{equation} 
\label{masterdwp+}
\log \bigg(\frac{V^{\pi^+}(T)}{V^{\mu}(T)}\bigg) < -\log(n\varphi)+2(1-p^+)K T\big(1-(1/n) \big),\quad \mathbb{P}\text{-a.s.}
\end{equation}
Combining \eqref{masterdwp} and \eqref{masterdwp+}, we find that
\begin{align} 
\label{dwpcomp}
\log \bigg(\frac{V^{\pi^-}(T)}{V^{\pi^+}(T)}\bigg) &= \log \bigg(\frac{V^{\pi^-}(T)}{V^{\mu}(T)}\bigg) - \log \bigg(\frac{V^{\pi^+}(T)}{V^{\mu}(T)}\bigg)\\
&> 2\log (n\varphi) + C T\nonumber\\
&> 0,\quad \mathbb{P}\text{-a.s.},\nonumber
\end{align}
provided that%~\footnote{~ Some rephrasing and rearrangement of material here, and below. Please check. (IK)  \textcolor{red}{ I have checked it, and am happy with the rearrangement, it is clearer now. (AV)}}
\begin{equation} \label{eqCT}
CT > -2\log (n\varphi) >0\,.
\end{equation}
Here, the constant $\,C\,$ is  given by \eqref{defC}. An easy calculation shows that \eqref{p+min} implies $C>0$, whereas the last inequality in \eqref{eqCT} comes from $\varphi<1/n$. It follows that the first inequality in \eqref{eqCT} is equivalent to the condition posited in \eqref{dwphorizon2}, and that in this case $\pi^-(\cdot)$  outperforms $\pi^+(\cdot)$ strongly over the time-horizon $[0,T]$. %\qed
\end{proof}

%%%%%%%%%%%%%%%%%%%%%%%%%%%%%%%%%%%%%%%%%%%%%%%%%%%%%%%%%%%%%%%%%%%%%%

 Proposition \ref{dwpposneg}\, shows that, as long as the diversity-weighted portfolio $\pi^{(p^+)}(\cdot)$ is ``sufficiently similar" to the market portfolio $\mu(\cdot)$ (and thus ``far enough" from $\pi^{(p^-)}(\cdot)$), it is outperformed  strongly, over sufficiently long time-horizons, by the diversity-weighted portfolio $\pi^{(p^-)}(\cdot)$ with negative parameter -- provided, of course,  that the aforementioned conditions on the volatility structure and non-failure of stocks hold.

%%%%%%%%%%%%%%%%%%%%%%%%%%%%%%%%%%%%%%%%%%%%%%%%%%%%%%%%%%%%%%%%%%%%%%
%%%%%%%%%%%%%%%% ADDED IN REVISION %%%%%%%%%%%%%%%%%%%%%%%%%%%%%%%%%%

\begin{remark} %\footnote{\textcolor{red}{Added a remark here, to address revision point 2. Relevant? Correct positioning in article? Sufficient explanation? (AV)}}
Compared to \eqref{D}, the stronger \eqref{NF} assumption implies a possibly shorter minimal time horizon than that of \eqref{oldT}, over which the diversity-weighted portfolio with parameter $p^+\in(0,1)$ is guaranteed strongly to outperform the market. Namely, the fact that $\mathbf{G}_{p^+}(x)\geq1,\ \forall x\in\Delta^n_+,$ together with \eqref{gposbound}, implies that
\begin{equation} 
\max\big\{1,n\varphi^{p^+}\big\} \leq \big(\mathbf{G}_{p^+}(\mu(t))\big)^{p^+} \leq n^{1-p^+}.
\end{equation}
Recall from the end of Section \ref{sec:dwp} that \eqref{NF} implies diversity \eqref{D} with parameter $\delta=(n-1)\varphi$. It follows from \eqref{oldT}, and using an argument similar to that in the proof of Theorem \ref{thm:dwpneg}, that the positive-parameter diversity-weighted portfolio is a strong arbitrage relative to the market over horizons $[0,T]$, with
\begin{equation}
T >  \min\bigg\{  \frac{\,2\log n\,}{\varepsilon \delta p}, \frac{-2\log \big(n\delta/(n-1)\big)}{\varepsilon \delta (1-p)} \bigg\}.
\end{equation}
This is an improvement of \eqref{oldT} if and only if $\varphi>n^{-1/p^+}$.
\qed
\end{remark}

%%%%%%%%%%%%%%%%%%%%%%%%%%%%%%%%%%%%%%%%%%%%%%%%%%%%%%%%%%%%%%%%%%%%%%
%%%%%%%%%%%%%%%%%%%%%%%%%%%%%%%%%%%%%%%%%%%%%%%%%%%%%%%%%%%%%%%%%%%%%%

%%%%%%%%%%%%%%%%%%%%%%%%%%
\subsection{Mixing}
%\label{sec:rank}
%%%%%%%%%%%%%%%%%%%%

Since the no-failure assumption \eqref{NF} does not hold in real markets, it is of interest to find variants of \eqref{defdwp} which exhibit   similar performance, but require weaker assumptions. The rank-based variant in Proposition \ref{prop:rankdwpneg} is one attempt at this; Proposition \ref{dwpmix} right below is another.
%%%%%%%%%%%%%%%%%%%%%%%%%%%%%%%%%%%%%%%%%%%%%%%%%%%%%%%%%%%%%%%%%%%%%%
\begin{proposition} \label{dwpmix}
Define the portfolio
\begin{equation} \label{mixpi}
\widehat{\pi}(t)\,=\,\mathfrak{p}(t) \pi^+(t) +  (1-\mathfrak{p}(t)) \pi^-(t),
\end{equation}
with $\pi^\pm(\cdot)=\pi^{(p^\pm)}(\cdot)$ diversity-weighted portfolios as defined in \eqref{defdwp} with $p^+\in(0,1)$ and $p^-<0$, and the mixing proportion $\mathfrak{p}(\cdot)$ given by
%\footnote{We define $\mathbf{G}_{p^-}(\bar{x}):=\lim_{x\to\bar{x}}\mathbf{G}_{p^-}(x)=0$ for any $\bar{x}$ in the simplex with $\prod_i \bar{x}_i=0$.}
\begin{equation}\label{defprop}
\mathfrak{p}(t)=\frac{\mathbf{G}_{p^+}(\mu(t))}{\mathbf{G}_{p^+}(\mu(t))+\mathbf{G}_{p^-}(\mu(t))}\in(0,1). %stocks never crash in our model
\end{equation}
In the market model \eqref{model}, with assumptions \eqref{ND} and \eqref{D}, the portfolio $\widehat{\pi}(\cdot)$ is a strong arbitrage relative to the market $\mu(\cdot)$, over time-horizons $[0,T]$ with
\begin{equation} \label{mixhorizon}
T \,>\, \mathfrak{T}\,:=\, \frac{2(1+n^{(1/p^-)-1}) \log\big( n^{(1/p^+)-1}+n^{(1/p^-)-1} \big)}{\varepsilon \delta(1-p^+)}\, .
\end{equation}
\end{proposition}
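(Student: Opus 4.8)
The plan is to recognise that the mixing weight $\mathfrak{p}(\cdot)$ of \eqref{defprop} is precisely the one that turns $\widehat{\pi}(\cdot)$ into a functionally-generated portfolio, generated by the \emph{sum} $\widehat{\mathbf{G}}:=\mathbf{G}_{p^+}+\mathbf{G}_{p^-}$ of the two generating functions from \eqref{dwpgf}. Writing $\lambda_\pm(t):=\mathbf{G}_{p^\pm}(\mu(t))/\widehat{\mathbf{G}}(\mu(t))$, so that $\lambda_+=\mathfrak{p}$ and $\lambda_-=1-\mathfrak{p}$, the identity $\mathrm{D}_i\log\widehat{\mathbf{G}}=\lambda_+\mathrm{D}_i\log\mathbf{G}_{p^+}+\lambda_-\mathrm{D}_i\log\mathbf{G}_{p^-}$ inserted into the recipe \eqref{fgp} collapses, using $\lambda_++\lambda_-=1$, to $\widehat{\pi}=\mathfrak{p}\,\pi^++(1-\mathfrak{p})\,\pi^-$. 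First I would verify this, check that $\widehat{\mathbf{G}}$ inherits the $C^2$ and boundedness requirements from its two summands, and thereby invoke Fernholz's master equation \eqref{master} for $\widehat{\pi}(\cdot)$.

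The second step is to decompose the drift. Since $\mathrm{D}^2_{ij}\widehat{\mathbf{G}}=\mathrm{D}^2_{ij}\mathbf{G}_{p^+}+\mathrm{D}^2_{ij}\mathbf{G}_{p^-}$, the expression \eqref{drift} splits as $\widehat{\mathfrak{g}}=\mathfrak{p}\,\mathfrak{g}_{p^+}+(1-\mathfrak{p})\,\mathfrak{g}_{p^-}$, and applying \eqref{dwpdrift} to each piece yields
\[
\widehat{\mathfrak{g}}(t)=\mathfrak{p}(t)(1-p^+)\,\gamma^*_{\pi^+}(t)+(1-\mathfrak{p}(t))(1-p^-)\,\gamma^*_{\pi^-}(t).
\]
Both summands are nonnegative (recall $p^+<1$, $p^-<0$, and $\gamma^*\geq0$ for long-only portfolios), so I would simply discard the second and keep $\widehat{\mathfrak{g}}\geq \mathfrak{p}(1-p^+)\gamma^*_{\pi^+}$.

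The third step bounds the three surviving ingredients under \eqref{ND} and \eqref{D}. The elementary simplex estimates $1<\mathbf{G}_{p^+}(x)\leq n^{1/p^+-1}$ and $0<\mathbf{G}_{p^-}(x)\leq n^{1/p^--1}$ (each extremised at the barycenter $x_i=1/n$, exactly as in \eqref{gpbound}) give the boundary bound $\log\!\big(\widehat{\mathbf{G}}(\mu(T))/\widehat{\mathbf{G}}(\mu(0))\big)>-\log\!\big(n^{1/p^+-1}+n^{1/p^--1}\big)$ together with the mixing bound $\mathfrak{p}(t)\geq(1+n^{1/p^--1})^{-1}$. For the excess growth rate I would establish $\gamma^*_{\pi^+}(t)\geq \varepsilon\delta/2$: the elementary inequality $\pi^{(p^+)}_{(1)}(t)\leq\mu_{(1)}(t)$, valid for $p^+\in(0,1)$ because the positive-parameter reweighting shrinks the top weight (the fact underlying \eqref{oldT} in \cite{fkk05}), combines with \eqref{D} to give $\pi^+_{(1)}(t)<1-\delta$, whence \eqref{NDgamma} yields the claim. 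Feeding all three into the master equation produces
\[
\log\frac{V^{\widehat{\pi}}(T)}{V^\mu(T)}>-\log\!\big(n^{1/p^+-1}+n^{1/p^--1}\big)+\frac{1-p^+}{1+n^{1/p^--1}}\cdot\frac{\varepsilon\delta}{2}\,T,
\]
which is strictly positive exactly when $T>\mathfrak{T}$, giving the strong relative arbitrage.

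The main obstacle is conceptual rather than computational: recognising and justifying that $\mathfrak{p}(\cdot)$ of \eqref{defprop} is the generating-function-weighted mixture, so that $\widehat{\pi}$ is itself functionally generated by $\mathbf{G}_{p^+}+\mathbf{G}_{p^-}$ with the clean additive drift decomposition above. Everything downstream reuses the simplex bounds of Theorem \ref{thm:dwpneg} and the diversity estimate $\gamma^*_{\pi^+}\geq\varepsilon\delta/2$. A secondary point deserving care is that $\pi^-$ supplies no usable drift lower bound on its own under \eqref{D} alone (it would require \eqref{NF}); the mixing is precisely what lets the drift lean entirely on the positive-parameter piece while still harvesting the favourable boundary behaviour of $\pi^-$ through $\widehat{\mathbf{G}}$.
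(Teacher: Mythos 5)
Your proposal is correct and follows essentially the same route as the paper's own proof: functional generation of $\widehat{\pi}(\cdot)$ by $\widehat{\mathbf{G}}=\mathbf{G}_{p^+}+\mathbf{G}_{p^-}$, the additive drift decomposition with the $\gamma^*_{\pi^-}$ term discarded via its nonnegativity, the simplex bounds on $\mathbf{G}_{p^\pm}$, the mixing bound $\mathfrak{p}(t)\geq(1+n^{(1/p^-)-1})^{-1}$, and the diversity estimate $\gamma^*_{\pi^+}\geq\varepsilon\delta/2$ obtained from $\pi^+_{(1)}\leq\mu_{(1)}$ and \eqref{NDgamma}. The only difference is that you spell out the verification that $\mathfrak{p}(\cdot)$ makes $\widehat{\pi}(\cdot)$ functionally generated, a step the paper asserts without proof.
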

%%%%%%%%%%%%%%%%%%%%%%%%%%%%%%%%%
\begin{proof}
The portfolio \eqref{mixpi} is generated by the function $\widehat{\mathbf{G}}= \mathbf{G}_++\mathbf{G}_-$, with $\mathbf{G}_\pm=\mathbf{G}_{p^\pm}$ the generating functions of $\pi^{\pm}(\cdot)=\pi^{(p^\pm)}(\cdot)$ as in \eqref{dwpgf}; let $\mathfrak{g}_\pm(\cdot)=\mathfrak{g}_{p^\pm}(\cdot)$ be their drift processes as in \eqref{dwpdrift}. 

The drift process of the composite portfolio $\widehat{\pi}(\cdot)$ is then
\begin{align} 
\label{driftmix}
\widehat{\mathfrak{g}}(t) 
&= \frac{-1}{2\widehat{\mathbf{G}}(\mu(t))}\sum_{i,j=1}^n \mathrm{D}_{ij}^2 \widehat{\mathbf{G}}(\mu(t))\mu_i(t)\mu_j(t)\tau^\mu_{ij}(t)\nonumber\\
 &=  \mathfrak{p}(t)\, \frac{-1}{2\mathbf{G}_+(\mu(t))}\sum_{i,j=1}^n \mathrm{D}_{ij}^2 \mathbf{G}_+ (\mu(t))\mu_i(t)\mu_j(t)\tau^\mu_{ij}(t)\nonumber\\
&\quad +(1-\mathfrak{p}(t))\,\frac{-1}{2\mathbf{G}_-(\mu(t))}\sum_{i,j=1}^n \mathrm{D}_{ij}^2 \mathbf{G}_- (\mu(t))\mu_i(t)\mu_j(t)\tau^\mu_{ij}(t)\nonumber\\
&=\mathfrak{p}(t) \mathfrak{g}_+(t) +  (1-\mathfrak{p}(t)) \mathfrak{g}_-(t)\nonumber\\
&= (1-p^+) \mathfrak{p}(t) \gamma^*_{\pi^+}(t)+ (1-p^-)(1- \mathfrak{p}(t)) \gamma^*_{\pi^-}(t)\, ;
\end{align}
the final step follows from \eqref{dwpdrift}. We note that 
\begin{equation} \label{gamma0}
\gamma^*_{\pi^-}(t)\geq0,
\end{equation}
as this holds for any long-only portfolio by Lemma 3.3 of \cite{fk09}, which together with the observation that $\mathfrak{p}(t)<1$ allows us to obtain the bound
\begin{equation} \label{driftbd}
\widehat{\mathfrak{g}}(t) \geq (1-p^+) \mathfrak{p}(t) \gamma^*_{\pi^+}(t).
\end{equation}

We   % observe that we have 
note the simple bounds
\begin{equation} \label{gp+bound}
1=\sum_{i=1}^n \mu_i(t) \leq \sum_{i=1}^n \big(\mu_i(t)\big)^{p^+} = \big(\mathbf{G}_+(\mu(t)\big)^{p^+} \leq n^{1-p^+},
\end{equation}
and that   % note that 
the lower bound in \eqref{gpbound} holds even without \eqref{NF}, so now
\begin{equation} \label{gp-bound}
0 \,\leq \,\bigg(\sum_{i=1}^n \big(\mu_i(t)\big)^{p^-}\bigg)^{1/p^-} = \,\mathbf{G}_-(\mu(t)) \, \leq \, n^{(1/p^-)-1}\,.
\end{equation}
Using the lower bound from \eqref{gp+bound}, and the upper bound from \eqref{gp-bound}, we assert that  % \footnote{\textcolor{red}{Is this overkill? The fact that $\mathfrak{p}(t)\in(0,1)$ is quite obvious from (59) and the fact that the generating functions are positive. (AV)}}
\begin{equation} 
\label{pbd}
\mathfrak{p}(t)= \left(1+\frac{\mathbf{G}_-(\mu(t))}{\mathbf{G}_+(\mu(t))}\right)^{-1} \geq \frac{1}{1+n^{(1/p^-)-1}}>0.
\end{equation}
One can easily see that in the positive-parameter diversity-weighted portfolio, one's proportion of wealth invested relative to the market portfolio is diminished for large-cap stocks, and increased for small-capitalization stocks; therefore
\begin{equation} \label{bdpibig}
\pi^+_{(1)}(t) = \frac{(\mu_{(1)}(t))^p}{\sum_{i=1}^n (\mu_i(t))^p} \leq \mu_{(1)}(t).
\end{equation}
The non-degeneracy condition \eqref{ND} implies \eqref{NDgamma}, which by \eqref{bdpibig} and the diversity assumption \eqref{D} leads to %is bounded as follows: 
\begin{equation} \label{gammabd}
\gamma^*_{\pi^+}(t)\geq \frac{\varepsilon}{2} \big(1-\pi^+_{(1)}(t)\big)\geq \frac{\varepsilon}{2}\big(1-\mu_{(1)}(t)\big) > \frac{\varepsilon}{2}\, \delta \,.
\end{equation}

Finally, applying \eqref{driftbd} to the master equation \eqref{master}, and then using the bounds \eqref{gp+bound},  \eqref{gp-bound}  and \eqref{pbd}, \eqref{gammabd}, we conclude that
\begin{align} \label{mastermix}
\log \bigg(\frac{V^{\widehat{\pi}}(T)}{V^\mu(T)} \bigg) &\geq \log \bigg( \frac{\mathbf{G}_+(\mu(T))+\mathbf{G}_-(\mu(T))}{\mathbf{G}_+(\mu(0))+\mathbf{G}_-(\mu(0))} \bigg) + (1-p^+)\int_0^T \mathfrak{p}(t) \gamma^*_{\pi^+}(t)\,\mathrm{d}t\nonumber\\
%&\quad +  (1-p^-)\int_0^T(1- \mathfrak{p}(t)) \gamma^*_{\pi^-}(t)\mathrm{d}t\\
&> - \log\big( n^{(1/p^+)-1}+n^{(1/p^-)-1} \big) + \frac{\varepsilon}{\,2\,}\,\big(1-p^+\big)\,\frac{\delta\, T}{1+n^{(1/p^-)-1}}\nonumber \\
&>0\,,\quad \mathbb{P}\text{-a.s.}
\end{align}
under the condition that $T$ satisfies \eqref{mixhorizon}.
%\qed
\end{proof}
%%%%%%%%%%%%%%%%%%%%%%%%%%%%%%%%%%%%%%%%%%%%%%%%%%%%%%%%%%%%%%%%%%%%%%

We have shown that the long-only portfolio \eqref{mixpi} outperforms strongly the market over sufficiently long time-horizons, under the assumptions of non-degeneracy and diversity. This is a property that the portfolio $\pi^+(\cdot)$ also has on its own, as proved in the Appendix of \cite{fkk05}. 

\smallskip
We remark also that, since the ``threshold" $\,\mathfrak{T}\,$ of \eqref{mixhorizon} is strictly decreasing in $p^-$, and the lower bound \eqref{mastermix} is strictly increasing in $p^-$ for fixed $T$, our result becomes stronger the closer the negative parameter $\,p^-\, $ gets to the origin. As we take $p^-\uparrow 0$, we recover the well-known result that $\widehat{\pi}(\cdot)=\pi^{(p^+)}(\cdot)$ is a strong arbitrage relative to the market. %more motivation needed?
%discussion of what happens to $\mathfrak{p}(t)$ as some \mu_i(t) comes close to 0?

\begin{remark}
The statement of Proposition \ref{dwpmix} can be strengthened, by weakening the diversity condition \eqref{D} to that of \emph{weak diversity} over the horizon $[0,T]$. This notion is defined in equation (4.2) of \cite{fkk05} as 
\begin{equation} 
\label{WD} 
\tag{WD}
\exists \, \,\delta \in(0,1)\ \ \text{ such that:}\ \ \ \ \mathbb{P} \left( \frac{1}{\,T\,}\int_0^T \mu_{(1)}(t)\, \mathrm{d}t <1-\delta\right) \,=\,1\,.
\end{equation}
Under this weaker assumption, it is straightforward to see that the following modification of \eqref{gammabd} will hold, thus maintaining the validity of the proof:
\begin{equation} 
\int_0^T \gamma^*_{\pi^+}(t)\,\mathrm{d}t\geq \frac{\,\varepsilon\,}{2} \int_0^T (1-\pi^+_{(1)}(t))\,\mathrm{d}t \geq \frac{\,\varepsilon\,}{2} \int_0^T (1-\mu_{(1)}(t))\,\mathrm{d}t > \frac{\,\varepsilon\,}{2}\,\delta \,T.
\end{equation}
\qed
\end{remark}

%%%%%%%%%%%%%%%%%%%%%%%%%%
\subsection{Arbitrary Time Horizons}
%\label{sec:rank}
%%%%%%%%%%%%%%%%%%%%
%%%%%%%%%%%%%%%%%%%%%%%%%%%%%%%

%\footnote{~ \textcolor{red}{Please let me know what you think about the repositioning of this subsection, as well as turning Remark 2 into a subsection. (AV)}} 
We should note here that, since both the \eqref{NF} and \eqref{LF} conditions imply diversity, by Lemma 8.1 and Example 8.3 of \cite{fkk05} it follows that \emph{short-term} relative arbitrage exists. That is, using the ``mirror portfolios" of Fernholz, Karatzas and Kardaras, one can construct a portfolio which outperforms the market over {\it any} time-horizon $[0,T]$.  % As the desired time $T$ to outperform the market becomes shorter, this portfolio approaches $\mu(\cdot)$. ~\footnote{\textcolor{red}{~I changed this sentence, following Johannes's advice --- please check. (AV)}}
%However, as the desired time $T$ to outperform the market becomes shorter, one is required to start with an initial wealth that goes to $\infty$ as $T\downarrow0$.

%%%%%%%%%%%%%%%%%%%%%%%%%%
\subsection{Two Portfolios with a Threshold}
We mention briefly  two other variants of the portfolio \eqref{defdwp} with $p<0$, which exhibit similar behavior for mid- and upper-range market weights (in the sense that they invest in stocks which decrease in value relative to the market, and sell stock when a company's market weight increases), but start selling stock once a firm's market weight falls below a certain threshold. The idea behind this threshold is that it represents a value below which the investor fears bankruptcy of the firm, and wishes to liquidate the position in the firm so as to minimize   losses. The two portfolios we propose have weights
\begin{equation} 
\label{pigamma}
  \Gamma_i(t) = \frac{\mu_i(t)^{k}e^{-\mu_i(t)/\theta}}{\sum_{j=1}^n \mu_j(t)^{k}e^{-\mu_j(t)/\theta}}\,, \qquad i=1, \ldots, n
\end{equation}
and
\begin{equation} \label{pibeta}
B_i(t) = \frac{\mu_i(t)^{\alpha} (1-\mu_i(t))^{\beta}}{\sum_{j=1}^n \mu_j(t)^\alpha (1-\mu_j(t))^\beta}\,, \qquad i=1, \ldots, n\,.
\end{equation}
Here, $k,\,\theta$ and $\alpha,\,\beta$ are all positive constants; the threshold values below which the portfolio weights become \emph{increasing} functions of market weights %~\footnote{\textcolor{red}{~Slight rewording here. (AV)}}~
 are $\theta k$ and $ \alpha/(\alpha+\beta)$, respectively. So far, our results regarding these portfolios remain restricted to empirical ones (see Section \ref{sec:empiric}). A theoretical result for the $\Gamma(\cdot)$ portfolio might follow in a way similar to that of Proposition 1 in \cite{bf08}.%\footnote{\textcolor{red}{Added this last sentence and reference. (AV)}}
%%%%%%%%%%%%%%%%%%%%%%%%%%

%%%%%%%%%%%%%%%%%%%%%%%%%%%%%%
\section{Empirical Results}
\label{sec:empiric}
We test the validity and applicability of our theoretical results by conducting an empirical study of the performance of our portfolios using historical market data.
% discuss applicability of theoretical results in discrete time?
More precisely, we back-test by investing according to the portfolios studied here  in the $n=500$ daily constituents of the S\&P 500 index for $T=6301$ consecutive trading days between 1 January 1990 and 31 December 2014. We obtain these  data from the Compustat and CRSP data sets  (these data sets are obtainable from the website\ \url{http://wrds-web.wharton.upenn.edu/wrds/}).  
%\footnote{~ These data sets are obtainable from \url{http://wrds-web.wharton.upenn.edu/wrds/} }, 
 We have incorporated dividends and delistings such as mergers, acquisitions and liquidations. % explain how exactly?

\medskip
%\begin{itemize}
%\item ordinary dividends are reinvested immediately in the portfolio that is being followed, so as to maintain the proportions as prescribed by that investment rule;
%\item mergers and acquisitions are treated in the way prescribed by the type of event; for instance, if an acquired company's stocks are bought for a cash amount, this liquidating dividend is treated in the same way as ordinary dividends;
%\item we only invest in active stocks, to wit, we stop trading in stocks after they are delisted --- hence the number of stocks in our market slightly decreases over time;
%\end{itemize}
We aim to simulate as realistically as possible the wealth evolution of an investor implementing our portfolios without any additional information, and as such impose on each trade proportional transaction costs equal to 0.5\% of the total absolute value traded. We rebalance the portfolio using a simple Total Variance criterion, that is, we only rebalance when the total variance ``distance"
\begin{equation} \label{TVdef}
\mathbf{TV}(\widetilde{\pi}(t),\pi(t))\,=\,\sum_{i=1}^n \,\widetilde{\pi}_i(t)  \, \big| \widetilde{\pi}_i(t)-\pi_i(t) \big|
\end{equation}
between the target portfolio $\pi(\cdot)$ and the portfolio $\widetilde{\pi}(\cdot)$ becomes larger than a certain threshold, which we determine empirically by trial-and-error --- that is, we simply try several threshold values for each portfolio, and select the one which gives the highest return over the entire holding period for that particular strategy.  %\footnote{\textcolor{red}{Addressed revision point 3. here, explaining how the TV threshold is chosen. Sufficient? (AV)}} %%%%%%%%%%%%%%%%%%%%%%%%%%%%%%%%%%%%
Here, $\widetilde{\pi}(t)$ is the vector of proportions of wealth invested in stocks obtained when the investor does \emph{not} rebalance at time $t$, that is 
$$\widetilde{\pi}_i(t) =  \bar{\pi}_i(t-1)\frac{V^{\bar{\pi}}(t-1)}{V^{\bar{\pi}}(t)}\frac{X_i(t)}{X_i(t-1)}\, , \qquad i=1,\ldots,n,\quad t=2,\ldots,T,$$
where $\bar{\pi}(t-1)$ is the portfolio that was actually implemented in the previous time step.  % sufficient?

We use \verb;R; to program our simulations --- the code is available  % from the author 
upon request. We summarize our findings in Table \ref{tab:emp}, which displays the average annual  relative returns in excess of the market (denoted ``Market-RR"), and the Sharpe ratios of our portfolios over the entire 25-year period that we use; the latter is computed as  % \footnote{~ Perhaps there should be an explanation of what the Var of the quantity in the denominator, under the radical, represents. I could not tell. (IK) \textcolor{red}{I have added the necessary definition, see (75). (AV)}}
%\footnote{\textcolor{red}{Changed $T/25$ to $\sqrt{T/25}$ here, as advised by Bob. Changed the numbers in Table 1 accordingly. (AV)}}
\begin{equation}
\text{SharpeRatio}(\pi(\cdot)) = \frac{\overline{R^\pi}}{\, \text{StdDev}(R^\pi)\,}\cdot \sqrt{\frac{T}{25}}\,. 
\end{equation}
Here, $R^\pi=\{ R^\pi(t),\, t=1,\ldots,T \}$ are the daily returns of the portfolio $\pi$, with mean $\overline{R^\pi}$, namely
\begin{equation}
R^\pi(t)=\frac{V^\pi(t)}{V^\pi(t-1)}-1,\quad t=2,\ldots,T,
\end{equation}
and the sample standard deviation $\text{StdDev}$ is defined as follows for any sequence of numbers $x_1,\ldots,x_k\in\mathbb{R}$ with average $\overline{x}$:
\begin{equation}
\big(\text{StdDev}(x_1,\ldots,x_k)\big)^2 := \frac{1}{k-1} \sum_{i=1}^k (x_i-\overline{x})^2.
\end{equation}
Moreover, we compute the following measure of performance for each portfolio:
\begin{equation}
\widetilde{\gamma}_\pi := \frac{\gamma_\pi}{\, \text{StdDev}(R^\pi)\,}\cdot \frac{1}{25}\, ,
\end{equation}
that is, the total growth rate divided by the sample standard deviation. The former is estimated as
\begin{align}
\gamma_\pi =\log\left(\frac{V^\pi(T)}{V^\pi(1)}\right).
\end{align}
Figure \ref{fig:V} showcases the wealth processes corresponding to our portfolios.%\footnote{\textcolor{red}{Updated figure. (AV)}}

\bigskip

\begin{table} [h!]
\caption{ 
{\small
Some measures of performance for the studied portfolios when traded between 1 January 1990 and 31 December 2014 on the constituents of the S\&P 500 index, over which the market had an average annual return of 9.7190\%. We write $\pi^\text{E}(\cdot)$ for the equally-weighted portfolio $\pi^\text{E}_i(\cdot)=1/n,\ i=1,\ldots,n$. The parameter ``$\mathbf{TV}$ Threshold" determines over which value of $\mathbf{TV}$ as in \eqref{TVdef} we rebalance, which we determine in-sample by trial-and-error.
\medskip
\medskip
\medskip
}
}
\label{tab:emp}  
\begin{tabular}{llllll}
\hline\noalign{\smallskip}
Portfolio & $\mathbf{TV}$ Threshold & Market-RR & Sharpe Ratio & $\widetilde{\gamma}_\pi$ \\
\noalign{\smallskip}\hline\noalign{\smallskip}
$\mu(\cdot)$                      & N/A    & 0\% & 0.60477 & 8.1678  \\
\noalign{\smallskip}%\hline
$\pi^\text{E}(\cdot)$             & 0.0005 & 2.2331\% & 0.70428 & 9.7127 \\
\noalign{\smallskip}%\hline
$\pi^{(p)}(\cdot),\ p=0.5$        & 0.0022 & 1.6125\% & 0.76766 & 10.949 \\
\noalign{\smallskip}%\hline
$\pi^{(p)}(\cdot),\ p=-0.5$         & 0.0015  & 4.9655\% & 0.78558 & 10.910\\
\noalign{\smallskip}%\hline
$\pi^\#(\cdot),\ r=-0.5,\,m=470$    & 0.0025 & 2.5778\% & 0.76979 & 10.873 \\
\noalign{\smallskip}%\hline
$\pi^\flat(\cdot),\ r=0.5,\,m=30$ & 0.0001 & 0.9578\% & 0.64713 & 8.8215 \\
\noalign{\smallskip}%\hline
$\pi^\flat(\cdot),\ r=-0.5,\,m=30$  & 0.0100 & 1.7645\% & 0.67412 & 9.2114 \\ 
\noalign{\smallskip}%\hline
$\widehat{\pi}(\cdot),\ p^+\!=0.5,\,p^-\!=-0.5$  & 0.0022 & 1.6125\% & 0.76766  & 10.949 \\
\noalign{\smallskip}%\hline
$\Gamma(\cdot),\ k=0.65,\,\theta=1\text{e-}4$ & 0.0020 & 3.6336\% & 0.68160 & 9.0796 \\
\noalign{\smallskip}%\hline
$B(\cdot),\ \alpha=1\text{e-}4,\,\beta=2$     & 0.0002 & 1.8701\% & 0.67919 & 9.2920 \\
\noalign{\smallskip}\hline
\end{tabular}
\bigskip
\bigskip
\end{table}

%estimate horizons required to beat market?

\begin{figure} [h!]
\centering
\includegraphics[width=\textwidth]{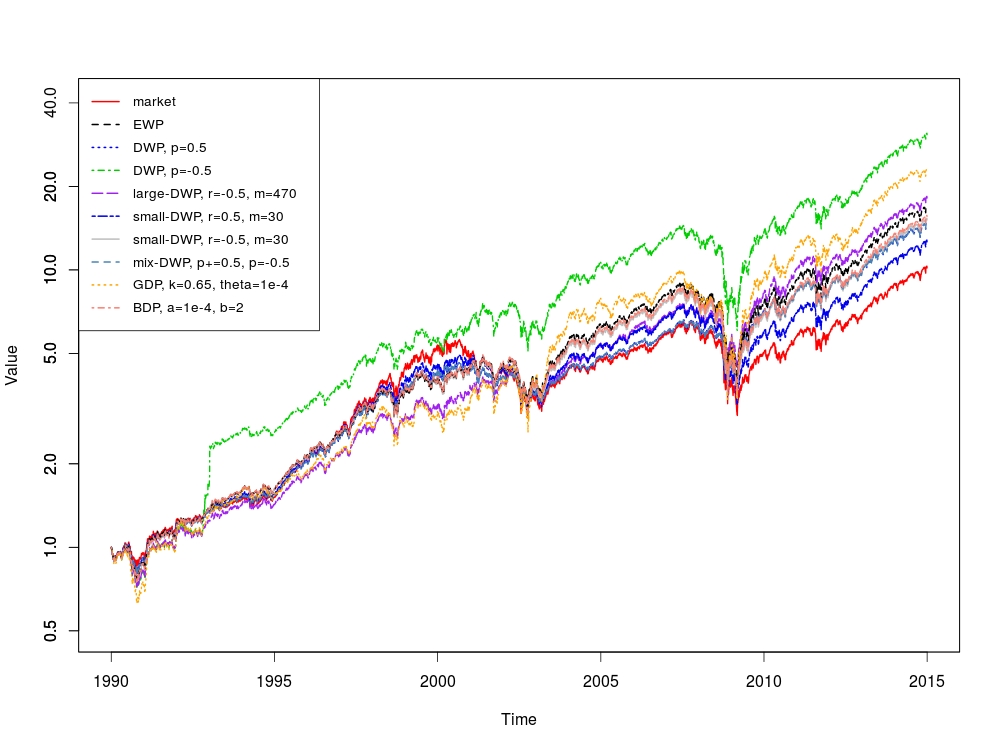}
\caption{
{\small 
The wealth processes corresponding to the portfolios in Table \ref{tab:emp}, namely: the market portfolio $\mu(\cdot)$; the diversity-weighted portfolios $\pi^{(p)}(\cdot)$ with $p=0$ (i.e.~the equally-weighted portfolio), $p=0.5$, and $p=-0.5$; the rank-based diversity-weighted portfolios $\pi^\#(\cdot)$ with $r=-0.5,\,m=470$ and $\pi^\flat(\cdot)$ with $r=0.5,\,m=30$, and $r=-0.5,\,m=30$; the mixed portfolio $\widehat{\pi}(\cdot)$ from \eqref{mixpi}, with $p^+=0.5,\, p^-=-0.5$; and the portfolios \eqref{pigamma} and \eqref{pibeta}, with $k=0.65,\,\theta=1\text{e-}4$ and $\alpha=1\text{e-}4,\,\beta=2$, respectively.}
}
\label{fig:V} 
\bigskip
\end{figure}

% DISCUSSION OF EMPIRICAL RESULTS
%~\footnote{\textcolor{red}{~ I have added a paragraph discussing the empirical results here. Please let me know what you think about it. (AV)} \\ Looks very good to me. (IK)}~ 
From Table \ref{tab:emp} and Figure \ref{fig:V}, we can see that all portfolios outperform the market by quite a margin (however, only after the year 2000), both in terms of the market-relative return, as well as the Sharpe Ratio. We also note that the simulated realizations of wealth processes are much more volatile than that of the market; the portfolios seem to exploit market growth much better, but also lose value more quickly when the market does poorly. Moreover, the negative-parameter portfolios we study appear to perform better than their positive-parameter versions over the period studied. The reader will notice that the positive-parameter portfolio $\pi^{(p)}(\cdot)$ with $p=0.5$, and the mixing portfolio $\widehat{\pi}(\cdot)$, give identical results, which is because the weight \eqref{defprop} of the positive-parameter diversity-weighted portfolio is always very near 1 (namely $1-\mathfrak{p}(t)\sim 1\text{e-}11$). Finally, we wish to stress that the above portfolio and threshold parameters were optimized only heuristically, and therefore there is considerable  room for improvement (that is, through more systematic optimization) --- the performance of the portfolios is quite sensitive to these parameters, as well as the level of transaction costs. %\footnote{\textcolor{red}{Added comment on sensitivity here, addressing revision points 3. and 4. Sufficient? (AV)}} %%%%%%%%%%%%%%%%%%%%%
Our empirical study is therefore merely a demonstration of potential outperformance of the market with the portfolios studied for a small investor.

%%%%%%%%%%%%%%%%%%%%%%%%%%%%%%%%%%%%%%%%%%%%%%%%%%%%%%%%%%%%
\section{Discussion and Suggestions for Future Research}
\label{sec:discussion}

We wish to point out that all of our results require certain assumptions on the volatility structure of the market, as well as on the behavior of the market weights. The no-failure condition \eqref{NF} definitely \emph{does not} hold in real markets, since typically some companies do  crash. The weaker \eqref{LF} assumption is an improvement on this, and can be argued to hold for $m$ not too close to $n$ --- although, to the authors' knowledge, there is no mechanism holding this in place, as there is for the diversity assumption \eqref{D} which can be imposed by anti-trust regulation. In this regard,  see \cite{sf11}, as well as the  recent paper by \cite{ks14} in which the number of companies is allowed to fluctuate --  due to both splits and mergers of companies.

It has been raised by V. Papathanakos (private communication) %~\footnote{\textcolor{red}{~Is this the way to do ``cite" Papathanakos? I changed Remark 1 in the same way. (AV)}} 
that from a practitioner's point of view, there are several restrictions on the implementation of negative-parameter diversity-weighted portfolios on a large scale. One of these is that one typically demands that no position owns more than, say, 1\% of the outstanding shares of a security; our portfolios strongly invest in small stocks. Another constraint is related to liquidity: regular rebalancing in a predictable way (which is implied by all functionally-generated portfolios) incurs very high transaction costs. It would be useful to model these phenomena and their influence on portfolio performance. %~\footnote{\textcolor{red}{~Added this paragraph --- please let me know what you think of it. (AV)}} 

More generally, it would be of great interest to develop a theory of transaction costs in the framework of Stochastic Portfolio Theory, which would allow one to improve upon, or even optimize over, rebalancing rules given a certain portfolio. A first attempt at this was made by \cite[Section 6.3]{f02}, where R. Fernholz estimates the turnover in a diversity-weighted portfolio.  %~\footnote{\textcolor{red}{~Added the reference to Fernholz's book here. (AV)}}~

Another idea is to replace the almost sure assumptions \eqref{D}, \eqref{LF}, and \eqref{NF}, by \emph{probabilistic versions} of these assumptions, where the corresponding bounds on market weights hold with a probability close to but smaller than 1. It would be interesting to see whether \emph{probabilistic} relative arbitrages could be constructed, which have a certain ``likelihood of outperforming" the market --- see \cite{bhs12} for a first study in this direction.  %~\footnote{\textcolor{red}{~I have added something about probabilistic variations of almost sure assumptions here, please check. (AV)}}~

One could also incorporate additional information on expected drifts and bankruptcies, to improve the simple portfolios set forth in this paper. The approach by \cite{palw13} might be applied to achieve this. Moreover, it would be of interest to develop  methods for finding the optimal relative arbitrage within the class of functionally-generated portfolios; an  attempt at this was made in \cite{palw14}, and for more general strategies in \cite{fernk10} and \cite{fk11}. Limitations on the existence of relative arbitrages with respect to certain portfolios have been established in the recent paper   %by Wong 
\cite{wong15}.% ~\footnote{\textcolor{red}{~ I have added a reference to \cite{wong15} here, which has been accepted by AoF but is still to be published. (AV)} \\ Sure; but it does not show. (IK)}

\section*{Acknowledgements}%\footnote{\textcolor{red}{I have inquired with WRDS regarding their copyright on the Compustat and CRSP data sets, they told me to include the following text. (AV)}}
Wharton Research Data Services (WRDS) was used in preparing the data for this paper. This service and the data available thereon constitute valuable intellectual property and trade secrets of WRDS and/or its third-party suppliers.

%\begin{acknowledgements}
%If you'd like to thank anyone, place your comments here
%and remove the percent signs.
%\end{acknowledgements}

% BibTeX users please use one of
\bibliographystyle{plainnat}
\bibliography{sptrefs2}   % name your BibTeX data base

\end{document}